\newcommand{\remref}[1]{\hyperref[#1]{(R\ref*{#1})}}
\newcommand{\itemref}[1]{\hyperref[#1]{(\ref*{#1})}}
\newcommand{\Itemref}[1]{\hyperref[#1]{Item (\ref*{#1})}}
\newtheorem{theorem}{Theorem}
\newtheorem{lemma}{Lemma}
\newtheorem{corollary}{Corollary}
\newtheorem{proposition}{Proposition}
 \theoremstyle{remark}
 \newtheorem{remark}{Remark}
 \theoremstyle{definition}
 \newtheorem{definition}{Definition}
\newcommand{\floor}[1]{\left\lfloor#1\right\rfloor}
\newcommand{\ceil}[1]{\left\lceil#1\right\rceil}
\newcommand{\N}{\mathbb{N}}
\newcommand{\intv}[2]{\left \llbracket #1, #2 \right \rrbracket}
\newcommand{\card}[1]{\left | #1 \right |} % cardinal
\DeclareMathOperator{\vertices}{V} % set of vertices
\DeclareMathOperator{\edges}{E} % set of edges
\DeclareMathOperator{\neigh}{{N}} % neighborhood
\DeclareMathOperator{\maxdeg}{{\Delta}} % max degree
\newcommand{\induced}[2]{#1\!\left [#2 \right ]} % induced subgraph
\DeclareMathOperator{\girth}{\mathbf{girth}} % length of shortest cycles
\title{Scattered packings of cycles\footnote{This work was partially
    supported by the Warsaw Center of Mathematics and Computer
    Science (Aistis Atminas and Jean-Florent Raymond) and the (Polish) National Science Centre
    under grants PRELUDIUM 2013/11/N/ST6/02706 (Jean-Florent Raymond) and SONATA 2012/07/D/ST6/02432 (Marcin Kami\'nski). Emails:
    \texttt{a.atminas@warwick.ac.uk}, \texttt{mjk@mimuw.edu.pl}, and \texttt{jean-florent.raymond@mimuw.edu.pl}.}}
\date{}
\author[1]{Aistis Atminas}
\author[2]{Marcin Kami\'nski}
\author[2,3]{Jean-Florent Raymond}
\affil[1]{DIMAP and Mathematics Institute, University of Warwick, Coventry, UK.}
\affil[2]{Institute of Informatics, University of Warsaw, Poland.}
\affil[3]{LIRMM, University of Montpellier, France.}
\begin{document}
\SetAlgorithmName{Algorithm}{Algorithm}{List of algorithms} % to have
                                % the correct autoref name

\maketitle
\begin{abstract}
\noindent We consider the problem \textsc{Scattered Cycles}
  which, given a graph $G$ and two positive integers $r$ and $\ell$, asks whether $G$
  contains a collection of $r$ cycles that are pairwise at distance
  at least $\ell$. This problem generalizes the problem \textsc{Disjoint
  Cycles} which corresponds to
the case~$\ell = 1$. We prove that when parameterized
by $r$, $\ell$, and the maximum degree $\Delta$, the problem
\textsc{Scattered Cycles} admits a kernel on
$24\ell^2\Delta^\ell r \log(8 \ell^2\Delta^\ell r)$ vertices. We also provide a
$(16\ell^2 \Delta^\ell)$-kernel for the case $r=2$ and a $(148 \Delta r
  \log r)$-kernel for the case~$\ell = 1$. Our proofs rely on two simple
reduction rules and a careful~analysis.
\smallskip

\noindent{\bf Keywords:} cycle packing, kernelization, multivariate algorithms, induced structures.
\end{abstract}

\section{Introduction}
\label{sec:intro}

We consider the problem of deciding if a graph contains a collection
of cycles that are pairwise far apart. More precisely, given a graph~$G$ and two positive integers~$r$ and~$\ell$, we have to decide
if there are at least~$r$ cycles in~$G$ such that the distance between
any two of them is at least~$\ell$. By \emph{distance} between two
subgraphs $H,H'$ of a graph $G$ we mean the minimum number of edges in
a path from a vertex of~$H$ to a vertex of~$H'$ in~$G$.
This problem, that we call \textsc{Scattered Cycles}, is a
generalization of the well-known \textsc{Disjoint Cycles} problem,
which corresponds to $\ell=1$. It is also related to the \textsc{Induced Minor} problem\footnote{Given two graphs $H$ (guest)
  and $G$ (host),
  the \textsc{Induced Minor} problem asks whether $H$ can be obtained from an induced subgraph of $G$ by
contracting edges.} when $\ell=2$ in the sense that a graph contains $r$ cycles which are pairwise
at distance at least 2 if and only if this graph contains $r \cdot
K_3$ as induced minor. Hence, any result about the computational complexity of
\textsc{Scattered Cycles} gives information on the complexity
of~\textsc{Induced Minor}.
Lastly, it can be seen as an extension of the problems
\textsc{Independent Set} and \textsc{Scattered Set}, which instead of
cycles, ask for vertices which are far apart.

It is worth noting that besides the connections to other problems
mentioned above, this problem has several features which make its
study interesting. The first one is that neither positive instances,
nor negative ones are minor-closed classes. Therefore the tools from Graph
Minors (in particular the Graph Minor Theorem~\cite{Robertson2004325})
do not directly provide complexity results for this problem.

Also, \textsc{Scattered Cycles} seems unlikely to
be expressible in terms of the usual containment relations on graphs.
As pointed out above, the cases $\ell=1$ and $\ell=2$ correspond to
checking if the graph contains $r \cdot K_3$ as minor or induced
minor, respectively. However for $\ell>2$, none of the common
containment relations conveys the restriction that cycles have to be
at distance at least~$\ell$. Again, the techniques related to the
minor relation cannot be applied immediately.

Lastly, the special case $\ell = 2$ and $r = 2$ corresponds to a
question of~\cite{2013arXiv1309.1960C} (also raised in~\cite{Golovach09,
  belmonte}) about the complexity of checking whether a graph contains
two mutually induced cycles (equivalently, two triangles as induced minor).

Our goal in this paper is to investigate the kernelizability of \textsc{Scattered
  Cycles} under various parameterizations.
\autoref{tab:param} summarizes known results and the ones that we
obtained on the parameterized complexity of the problem with respect
to various combinations of parameters, among the number of 
cycles, the minimum distance required between two cycles and the
maximum degree of the graph. A parameterized problem is said to be
paraNP-hard if it is NP-hard for some fixed value of the parameter.
Unless otherwise specified, we will use all along the paper $r$, $\ell$ and $\Delta$ to
denote, respectively, the number of cycles, the minimum distance allowed between two
cycles and the maximum degree of the input graph.
 As the problem is unlikely to have
a polynomial kernel when parameterized by any of these parameters
taken alone (cf.\ \autoref{tab:param}) , we
naturally explore its kernelizability with several parameters. The
first column of the table counts the number of parameters taken into
account in a given~row, and ``par'' in the second column indicates
that the corresponding value is taken as parameter.
\begin{table}[ht]
  \centering
%  \rowcolors{1}{}{lightgray}
  \begin{tabular}{|c|c|c|c|l|}
    \hline
    \#par&$r$&$\ell$&$\Delta$&Complexity\\
    \hline \hline
    0& -- & -- & -- & NP-hard (\autoref{c:hard})\\ \hline
    \multirow{4}{*}{1}&par & $=1$ & -- &
                                         \parbox{8cm}{
                                         \begin{itemize}
                                         \item FPT (minor checking);
                                         \item no polynomial kernel unless
                                           $\mathrm{NP} \subseteq
                                           \mathrm{coNP}/\mathrm{poly}$~\cite{Bod09}.
                                         \end{itemize}
                                         }\\

    \cline{2-5}
         &par & $>1$ & -- & W[1]-hard (\autoref{c:hard})\\ \cline{2-5}
         &-- & par & -- & paraNP-hard (\autoref{c:hard})\\ \cline{2-5}
         & -- & -- & par &paraNP-hard (\autoref{c:hard})\\
    \hline
    \multirow{6}{*}{2}& par & $=1$ & par &
                                            \parbox{8cm}{
                                            \begin{itemize}
                                            \item $O(\Delta r \log
                                              r)$-kernel~\cite{fominLNGS11hitt};
                                            \item $(148\Delta r\log
                                              r)$-kernel (\autoref{c:dc}).
                                            \end{itemize}
                                            }\\\cline{2-5}
                                          
         &par & $>1$ & par&  Open\\
    \cline{2-5}
    % cf
    % http://en.wikipedia.org/wiki/Induced_path#Algorithms_and_complexity
         &=2 & par & par & $(16\ell^2 \Delta^\ell)$-kernel (\autoref{t:2cycles})\\ \cline{2-5}
         &-- & par & par & para-NP-hard (\autoref{c:hard})\\ \cline{2-5}
         &par & par & -- &W[1]-hard (\autoref{c:hard})\\
    \hline
    3&par & par & par & $(24\ell^2\Delta^\ell r \log(8\ell^2\Delta^\ell
                        r))$-kernel (\autoref{t:manycycles})\\
    \hline
  \end{tabular}
  \caption{Complexity of \textsc{Scattered Cycles} wrt.\ various parameterizations.}
  \label{tab:param}
\end{table}

Our results are the following.

\begin{theorem}\label{t:manycycles}
  The problem \textsc{Scattered Cycles} admits a kernel on
  $24\ell^2\Delta^\ell r \log(8 \ell^2\Delta^\ell r)$ vertices when
  parameterized by $\ell$, $r$, and $\Delta$.
  Moreover this kernel can be computed from an $n$-vertex graph
  in~$O(n\ell)$~steps.
\end{theorem}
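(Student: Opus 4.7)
The plan is to design two local reduction rules, establish their safety, and then prove that any irreducible instance exceeding the stated size bound must already be a yes-instance. The two rules should be simple enough that their exhaustive application runs in $O(n\ell)$ total time.

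For the first reduction I would use the classical simplification: iteratively delete every vertex of degree at most $1$. Safety is immediate since such vertices cannot lie on any cycle and hence not on any cycle in a scattered packing. The second rule should target ``long thin'' substructures that inflate the vertex count without contributing new cycles or affecting scatteredness. A natural choice is to shorten every maximal internally degree-$2$ path of length greater than $\ell+1$ down to length exactly $\ell+1$. One must verify that this substitution preserves the set of cycles (up to suppression of internal subdivisions) and, crucially, that any two cycles separated by such a path remain at distance at least $\ell$. Both rules are local, can be implemented via a BFS of depth $O(\ell)$ around the affected vertex, and together give the claimed $O(n\ell)$ preprocessing time.

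After exhaustive application of both rules the graph $G'$ has minimum degree at least $2$ and no overlong degree-$2$ path. The combinatorial heart of the theorem is then: if $|\vertices(G')| > 24\ell^2 \Delta^\ell r \log(8\ell^2 \Delta^\ell r)$, then $G'$ admits $r$ cycles pairwise at distance at least $\ell$. I would prove this by a greedy procedure. At round $i=1,\dots,r$, the current graph $G_{i-1}$ has minimum degree at least $2$, so BFS from any vertex to depth $\ell$ either closes a non-tree edge (yielding a short cycle of length at most $2\ell+1$) or exhibits a tree of size at most $O(\Delta^\ell)$; in the latter case, combined with the path restriction of Rule~2, a short cycle must sit just outside. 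Pick such a cycle $C_i$, delete its closed $\ell$-neighborhood $N_{\ell}[C_i]$ (at most $|C_i|\cdot\Delta^\ell = O(\ell\Delta^\ell)$ vertices), reapply Rule~1, and iterate.

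The main technical obstacle, and the origin of the $\log r$ factor, is bounding the cumulative number of vertices destroyed when Rule~1 \emph{cascades} after each deletion of $N_{\ell}[C_i]$: stripping a ball can push new vertices down to degree at most $1$ and trigger a chain of further removals, and we must not let the total over all $r$ rounds exceed $N$. I would handle this via an amortized charging argument reminiscent of the Erd\H{o}s--P\'osa proof, attributing cascaded vertex losses to the ``feedback'' structure surrounding the cycles already selected and bounding the total charge by $O(\log r)$ per direct vertex removal. Combining the direct per-round cost $O(\ell\Delta^\ell)$, the factor $O(\ell)$ inherited from the length threshold in Rule~2, and the $O(\log r)$ amortization recovers the bound $O(\ell^2\Delta^\ell r \log(\ell^2\Delta^\ell r))$, with the explicit constants $24$ and $8$ obtained by a careful bookkeeping pass. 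Verifying this amortization and pinning down the constants is where I expect the bulk of the work to lie.
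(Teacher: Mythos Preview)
Your reduction rules are essentially the paper's rules \remref{e:r1} and \remref{e:r2}, and the high-level ``pick a cycle, delete its $\ell$-neighbourhood, recurse'' outline matches the paper's inductive scheme. The gap is in your claim about the length of the cycle you pick. You assert that a BFS of depth $\ell$ must either close a non-tree edge (giving a cycle of length at most $2\ell+1$) or that ``a short cycle must sit just outside'' by virtue of Rule~2. This is false: an $\ell$-reduced graph can have arbitrarily large girth. Any $3$-regular graph has no vertices of degree $\leq 2$ at all, so both rules are vacuous regardless of $\ell$, yet such graphs exist with girth as large as you like. Consequently the per-round deletion cost $|N_\ell[C_i]| \leq |C_i|\cdot \Delta^\ell$ is \emph{not} $O(\ell \Delta^\ell)$; it is $g\cdot \Delta^\ell$ where $g=\girth(G)$ is unbounded in terms of $\ell$ and $\Delta$ alone.

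The paper resolves this with a growth argument, not a local BFS. The key structural fact (Item~\itemref{e:g6} of \autoref{l:generic}) is that in an $\ell$-reduced graph the successive distance layers around a shortest cycle grow geometrically, yielding $|\vertices(G)| \geq g\cdot 2^{\lfloor g/(4\ell)\rfloor -1}$. On the other hand, the inductive step gives the upper bound $|\vertices(G)| \leq 2\ell g \Delta^\ell + h^\ell_{r-1}(\Delta)$. Playing the exponential lower bound against this upper bound forces $g = O\bigl(\ell \log(\ell \Delta^\ell + h^\ell_{r-1}(\Delta))\bigr)$, and \emph{this} is the source of the logarithmic factor --- it enters through the girth bound, not through any amortization of cascading degree-$1$ deletions. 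The cascade you worry about (reducing $G[R]$ after removing the ball) is handled directly: the paper shows that at most $2\ell|N_\ell|$ vertices are lost, and this is absorbed into the $2\ell g \Delta^\ell$ term. So your amortized charging scheme is aimed at the wrong target; what you actually need is the lower bound $|\vertices(G)| \geq g\cdot 2^{\Omega(g/\ell)}$ to control the girth.
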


As mentioned above, a trivial consequence of
\autoref{t:manycycles} is that the problem of checking if a graph
$G$ contains $r\cdot K_3$ as induced minor admits a $O(\Delta^2 r \log (\Delta^2 r))$-kernel when
parameterized by $r$ and~$\Delta$.

\begin{theorem}\label{t:2cycles}
  The problem \textsc{Scattered Cycles} restricted to $r=2$
  admits a kernel on $16 \ell^2 \Delta^\ell$ vertices when parameterized
  by $\ell$ and ~$\Delta$.
  Furthermore this kernel can be computed from an $n$-vertex graph
  in~$O(n \ell)$~steps.
\end{theorem}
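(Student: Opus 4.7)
The plan is to design a kernelization algorithm based on two simple reduction rules, and then show that any graph irreducible under these rules with more than $16\ell^2 \Delta^\ell$ vertices is automatically a positive instance. For rule~(R1) I would take the classical pruning of any vertex of degree at most one, which cannot lie on any cycle and therefore preserves the answer. For rule~(R2) I would introduce a rule tailored to the $r=2$ setting: if $G$ contains cycles in two distinct connected components, we have a trivial positive instance; otherwise we may remove every cyclic component but one. After exhaustive application of both rules, we may assume $G$ is connected and has minimum degree at least two, so that every vertex lies on some cycle.

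The heart of the argument is a structural claim: if the reduced graph $G$ has more than $16\ell^2 \Delta^\ell$ vertices, then $G$ contains two cycles at distance at least $\ell$. The approach begins by picking any vertex $v$ and running a BFS of depth $\ell$ from $v$. Since $v$ lies on a cycle, some back-edge must be encountered during this exploration, yielding a cycle $C_1$ of length at most $2\ell$ contained in the ball $B(v,\ell)$. The ball $B(C_1,\ell-1)$ then contains at most $|C_1| \cdot 2\Delta^{\ell-1} \leq 4\ell \Delta^\ell$ vertices, so the residual graph $G' = G - B(C_1,\ell-1)$ still contains more than $16\ell^2 \Delta^\ell - 4\ell \Delta^\ell$ vertices. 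Any cycle $C_2$ in $G'$ automatically lies at distance at least $\ell$ from $C_1$ by construction, so it only remains to exhibit such a cycle.

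Exhibiting $C_2$ is the delicate step: deleting $B(C_1,\ell-1)$ may create new degree-one vertices in $G'$, so $G'$ need not itself be reduced. I would re-apply rule~(R1) inside $G'$ and carefully control how many vertices the resulting secondary pruning can remove. The factor $\ell^2$ in the kernel bound should emerge from this bookkeeping: each of the at most $2\ell$ vertices lying on the boundary of $B(C_1,\ell-1)$ may initiate a chain of iterated removals of length at most $\ell$, and each such chain can drain at most $\Delta^\ell$ surrounding vertices. Matching these contributions against the slack $16\ell^2 \Delta^\ell - 4\ell \Delta^\ell$ should ensure that the re-pruned $G'$ remains non-empty, hence (by minimum degree two) contains a cycle $C_2$.

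The main obstacle I foresee is precisely this secondary-pruning bookkeeping: arguing that after deleting $B(C_1,\ell-1)$ and re-pruning, some cycle still survives. A successful argument must exploit the original irreducibility of $G$ strongly, showing that the tree-like appendages growing from the boundary of the deleted ball are uniformly small in size; this is the ``careful analysis'' advertised in the introduction. Once this bound is in place, the $O(n\ell)$ running time follows easily, since each reduction step can be tested via a single BFS of depth $\ell$ and only a linear number of vertices are ever touched throughout the algorithm.
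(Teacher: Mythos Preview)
Your proposal has a genuine gap at its very first structural step, and the gap stems from an omission in your reduction rules. You only prune low-degree vertices and handle components; you never shorten long subdivision paths. Consequently your ``reduced'' graph may simply be a single cycle $C_n$ with $n$ arbitrarily large: it is connected, has minimum degree two, contains exactly one cycle, and is therefore a negative instance with unbounded order. This already kills the claimed kernel bound, and it shows concretely why your BFS argument fails: the sentence ``since $v$ lies on a cycle, some back-edge must be encountered during this exploration, yielding a cycle $C_1$ of length at most $2\ell$'' is false in general. A vertex can lie on a cycle without any cycle of length $\le 2\ell$ existing near it; what you need is a bound on the girth, and nothing in your rules provides one.

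The paper's fix is precisely the rule you are missing: contract any edge lying on a subdivision path of length greater than $\ell$ (and not in a triangle). This contraction is safe because such an edge cannot change which pairs of cycles are $\ell$-distant. With this extra rule, the paper first takes a \emph{shortest} cycle $C$ (length $g=\girth(G)$), shows that the region outside the $(\ell{-}1)$-ball of $C$ is a forest with all leaves on the boundary, and bounds $|V(G)|<2\ell g\,\Delta^\ell$. A separate lemma then exploits the absence of long subdivision paths to force $g\le 8\ell-4$, giving the $16\ell^2\Delta^\ell$ bound. Your secondary-pruning bookkeeping idea is in spirit close to the forest-counting step, but it cannot succeed until the girth is under control; once it is, you no longer need to ``find'' $C_1$ by BFS at all---you take a shortest cycle directly.
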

\autoref{t:2cycles} gives a partial answer to a question
of~\cite{2013arXiv1309.1960C} about the complexity of checking if
graph $H$ contains $2\cdot K_3$ as induced~minor.

The problem known as \textsc{Disjoint Cycles} corresponds to \textsc{Scattered
  Cycles} for~$\ell=1$. The authors of~\cite{Bod09} proved that when parameterized
by the number $r$ of cycles only, this problem does not have a
polynomial kernel unless ${\rm NP} \subseteq {\rm coNP}/{\rm poly}$.

For every graph $G$, let us denote by $\Lambda(G)$ the least non-negative integer $t$
such that $G$ does not contain $K_{1,t}$ as induced subgraph. This parameter refines the one of maximum
degree in the sense that for every graph $G$ we have~$\Delta(G)+1\geq
\Lambda(G)$ (hence $\{G,\ \Lambda(G)\leq k+1\}\supseteq\{G,\ \Delta(G)\leq k\}$).
 An $O(\Lambda r \log r)$-kernel has been provided for the problem
 \textsc{Disjoint Cycles} parameterized by the number $r$ of cycles
 and~$\Lambda$ in~\cite[Corollary~2]{fominLNGS11hitt}.
Building upon the techniques used to prove \autoref{t:manycycles}
and ideas from the proof of the aforementioned result, we achieved a
bound of the same order of magnitude, with a simpler proof and
explicit (small) constants.

\begin{theorem}\label{t:disjcyc}
  The problem \textsc{Disjoint Cycles} admits a kernel on $148 \Lambda r
  \log r$ vertices when parameterized by $\Lambda$
  and $r$. This kernel can be computed from an
  $n$-vertex graph in $O(n^2)$~steps.
\end{theorem}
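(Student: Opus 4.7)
The plan is to extend the proof of Theorem \ref{t:manycycles} specialized to $\ell = 1$, replacing the parameter $\Delta$ by the finer $\Lambda$ via a Ramsey-type observation, and then tightening the size analysis so that the $\log$ factor depends only on~$r$. I first apply the natural reduction of iteratively deleting vertices of degree at most $1$, which preserves the set of vertex-disjoint cycle packings and yields a graph $G$ with $\delta(G) \geq 2$. The key structural observation is that in a $K_{1,\Lambda}$-free graph, any vertex $v$ with $\deg(v) \geq \Lambda$ lies in a triangle: since $G[N(v)]$ has no independent set of size $\Lambda$, Ramsey's bound $R(2,\Lambda) = \Lambda$ produces an edge in $N(v)$, and that edge together with $v$ forms the triangle.

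I then split the cycle extraction into two phases. In the first phase, while $G$ contains a vertex of degree at least $\Lambda + 3r$, I pick a triangle at such a vertex, add it to the packing, and delete its three vertices; since each deletion drops the degree of any other vertex by at most~$3$, some vertex of degree $\geq \Lambda$ remains until either $r$ triangles are extracted (in which case the instance is YES and we are done) or the maximum degree drops below $\Lambda + 3r$. In the latter case, I then mirror the BFS-based extraction of the proof of Theorem \ref{t:manycycles}: repeatedly find a cycle of length $O(\log_{\Lambda} |V(G)|)$ via a Moore-type inequality on the BFS tree, excise it, restore $\delta \geq 2$ via the degree-$1$ reduction rule, and iterate. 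Choosing the BFS radius so as to depend only on $r$, each extracted cycle costs at most $O(\Lambda \log r)$ vertices, and summing over the $r$ extracted cycles gives the claimed kernel bound of $148\,\Lambda r \log r$.

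The main obstacle is the precise amortization in the second phase: to replace the $\log(\Lambda r)$ factor that a direct invocation of Theorem \ref{t:manycycles} would give by the sharper $\log r$, one must carefully bound the cascade of vertices removed by the degree-$1$ reduction after each cycle extraction, and scale the BFS radius in terms of $r$ alone. This is where the explicit constant $148$ is pinned down, through a careful tracking of constants in both the Moore-type cycle-length inequality and the cascade analysis. Finally, the $O(n^2)$ running time follows from the at most $O(n)$ BFS computations of cost $O(n)$ each (once the maximum degree is bounded by $\Lambda + 3r$ after the first phase), together with the linear-time degree-$1$ reduction.
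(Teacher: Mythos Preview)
Your approach is quite different from the paper's, and it contains a quantitative gap that prevents reaching the claimed bound. The paper does \emph{not} extend the iterative cycle-extraction argument of \autoref{t:manycycles}. Instead, it applies the $1$-reduction and then invokes the Erd\H{o}s--P\'osa theorem (\autoref{p:ep}): if $G'$ has no $r$ disjoint cycles then it has a feedback-vertex set $X$ of size at most $f(r)=O(r\log r)$. A $2$-approximate minimum fvs is computed in $O(n^2)$ time (\autoref{p:fvs}); if its size exceeds $2f(r)$ the instance is declared positive. Otherwise $G'\setminus X$ is a forest, and each $u\in X$ has fewer than $2\Lambda$ neighbours outside $X$ (the forest neighbourhood contains an independent set on half of its vertices, which must have size $<\Lambda$). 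Together with \autoref{c:forest-size} this gives $|V(G')|<|X|+4\Lambda|X|<9\Lambda f(r)<148\Lambda r\log r$. The $\log r$ factor thus comes entirely from Erd\H{o}s--P\'osa, and the single factor $\Lambda$ from one local counting step.

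Your extraction scheme yields a strictly weaker bound. After Phase~1 the maximum degree is only guaranteed to be below $\Lambda+3r$, not $O(\Lambda)$; and in Phase~2 the girth of the current $1$-reduced graph is $\Theta(\log|V(G)|)$, which depends on $\Lambda$ as well as on~$r$. Hence the cost of excising one shortest cycle together with its cascade is $O\bigl((\Lambda+r)\log|V(G)|\bigr)$, and summing over $r$ extractions gives a kernel of order $O\bigl(r(\Lambda+r)\log(\Lambda r)\bigr)$ rather than $O(\Lambda r\log r)$: when $r\gg\Lambda$ this is off by a factor of roughly $r/\Lambda$, and even when $r\le\Lambda$ an extra $\log\Lambda$ survives. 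There is also a secondary issue: deleting only degree-$\le 1$ vertices (without contracting $1$-redundant edges) does not make the graph $1$-reduced, so long subdivision paths may remain and the Moore-type girth bound fails --- for instance $C_n$ is untouched by your reduction yet has girth~$n$. To recover the stated constant with only a $\log r$ factor, you need the global structure provided by the feedback-vertex-set bound; the purely local extraction argument does not suffice.
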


As $\Delta(G)+1 \geq \Lambda(G)$ for every graph $G$, we immediately
obtain the following~corollary.
\begin{corollary}\label{c:dc}
  The problem \textsc{Disjoint Cycles} admits a kernel on $148 {(\Delta+1)} r
  \log r$ vertices when parameterized by $\Delta$
  and $r$. This kernel can be computed from an
  $n$-vertex graph in $O(n^2)$~steps.
\end{corollary}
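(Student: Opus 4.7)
The statement is an immediate corollary of Theorem~\ref{t:disjcyc}, so the plan is essentially a one-line invocation followed by a substitution in the size bound. Given an instance $(G,r)$ of \textsc{Disjoint Cycles} whose input graph has maximum degree at most $\Delta$, we simply run the kernelization algorithm of Theorem~\ref{t:disjcyc} on $(G,r)$. This already produces, in $O(n^2)$ time, an equivalent instance of \textsc{Disjoint Cycles} on at most $148\,\Lambda(G)\,r\log r$ vertices.

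It then remains to bound $\Lambda(G)$ in terms of $\Delta$. This is the inequality $\Lambda(G)\leq\Delta(G)+1$ explicitly recorded in the paragraph preceding Theorem~\ref{t:disjcyc}: a vertex with at most $\Delta(G)$ neighbors cannot be the center of an induced $K_{1,t}$ once $t>\Delta(G)$, so no such induced star exists for $t=\Delta(G)+1$. Substituting $\Lambda(G)\leq\Delta+1$ into the bound of Theorem~\ref{t:disjcyc} gives the desired kernel size $148(\Delta+1)r\log r$, while the running time $O(n^2)$ and the equivalence of instances are inherited verbatim. There is no real obstacle here; the only point to check is that the substitution is valid for every graph of maximum degree at most $\Delta$, which follows from the elementary inequality just recalled.
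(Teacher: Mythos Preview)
Your proposal is correct and matches the paper's own argument exactly: the corollary is obtained immediately from Theorem~\ref{t:disjcyc} by substituting the inequality $\Lambda(G)\leq\Delta(G)+1$ into the kernel bound.
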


The techniques used to obtain the above result do not translate to cases
$\ell>1$ and therefore cannot be directly used to improve
\autoref{t:manycycles}.

\paragraph*{Organization of the paper}
We introduce the reduction rules of our kernelization algorithm in
\autoref{sec:redux}, where we also consider packings of two distant
cycles and prove~\autoref{t:2cycles}. This result is generalized to
any number of cycles (\autoref{t:manycycles}) in
\autoref{sec:morecycles}. We investigate the special case $\ell
=1$ in \autoref{sec:disjcyc}. Lastly, \autoref{sec:hard} contains
the proofs of the hardness results that appear in \autoref{tab:param}.

\paragraph*{Discussion}
Using two simple reduction rules, we obtained a polynomial kernel for
the problem \textsc{Scattered Cycles} parameterized by the number of
cycles $r$, the distance required between any two cycles $\ell$ and
the maximum degree of the input graph. It should be noted that taken apart these parameters
are unlikely to give polynomial kernels. Furthermore, our proof is
constructive and the constants are small. The reduction also led to a
simplification of the proof of~\cite[Corollary~2]{fominLNGS11hitt}
that the problem \textsc{Disjoint Cycles} has a $O(\Lambda r \log
r)$-kernel, with in addition the computation of the constants.

A natural question is whether our upper-bound can be improved.
As we presented in the proof \autoref{l:lowerbnd}, there are reduced
graphs of order $(r-1) \left (
  \frac{d}{2}\right )^{\ell-1}$ which do not contain $r$ cycles pairwise
at distance at least~$\ell$. This suggests that other techniques or
reduction rules must be used in order to obtain a kernel of size $o(r \left (
  \frac{d}{2}\right )^{\ell-1})$ for this~problem.

\section{Reduced graphs}
\label{sec:redux}

\paragraph*{Basic definitions} Let $G$ be a graph. The
\emph{length} of a path is the number of edges it contains. The
\emph{distance} between two vertices $u,v\in \vertices(G)$ is the
minimum number of edges in a path from $u$ to~$v$. The \emph{girth} of a graph~$G$, denoted by $\girth(G)$, is the minimum length of a cycle in~$G$.
Two subgraphs $H,H'$ of $G$ are said to be \emph{$\ell$-distant} if
they are at distance at least~$\ell$. 
The set of non-negative integers is denoted by $\N$, and for every two $i,j \in \N$, we use
$\intv{i}{j}$ as a shorthand for the interval~$\{t \in \N,\ i \leq t \leq j\}$.

Let $\ell \in \N$ be a positive integer. An \emph{$\ell$-packing of $r$ cycles} in $G$ is a
collection of $r$ pairwise $\ell$-distant cycles of $G.$ These
packings generalize vertex-disjoint packings and induced packings
which respectively correspond to the cases $\ell = 1$ and~$\ell = 2.$
This notion enables us to formally define the problem that we consider,
as follows.

\begin{framed}
\noindent \textsc{Scattered Cycles}
  \begin{description}
\item[Input:] a graph $G$ and two integers $\ell \geq 1$ and $r\geq
  2$;
\item[Question:] Does $G$ have an \emph{$\ell$-packing of $r$ cycles}?
\end{description}
\end{framed}

\paragraph*{Reduced graphs} A \emph{subdivision path} in a graph $G$
is a subgraph of $G$ that is a path, and whose internal vertices are of
degree two. Let us call an edge $e$ of a graph
\emph{$\ell$-redundant} if it does not belong to a triangle, and if it
is an edge of a subdivision path of length more than~$\ell$. A graph is said to be
\emph{$\ell$-reduced} if it contains no $\ell$-redundant edge, nor a
vertex of degree zero or one. In order to obtain an $\ell$-reduced graph from
any graph, we consider the two following reduction~rules:
\begin{enumerate}[(R1)]
\item If the graph $G$ has an $\ell$-redundant edge, contract it.\label{e:r1} \medskip
\item If the graph $G$ contains a vertex $v$ of degree 0 or 1, delete it.\label{e:r2}\medskip
\end{enumerate}

It is clear that after an application of~\remref{e:r2} neither the set of cycles nor the distances between them are changed.
It is also easy to see that the contraction of an edge operation in \remref{e:r1} establishes a natural 1-1 correspondence between 
the set of cycles in the graph before and the set of cycles after an application of~\remref{e:r1}.
Moreover, one can also check that this correspondence preserves the property of being $\ell$-distant, i.e.
the two cycles are $\ell$-distant after an application of \remref{e:r1} if and only if the corresponding cycles before the application
of~\remref{e:r1} were $\ell$-distant. Hence for any positive integer $r$, these reduction rules do
not change the property of containing an $\ell$-packing of
$r$ cycles and also note that every graph can be reduced by a finite number of applications
of~\remref{e:r1} and~\remref{e:r2}. \autoref{a:redalg} is a
linear-time implementation of the reduction. Let us describe~it.

Intuitively, the set $S$ can be seen as a set of marked vertices, while
the vertices of $V \setminus S$ are not marked. The algorithm starts
from a graph where all vertices are unmarked (i.e. $S = \emptyset$)
and considers unmarked vertices while some exist in the graph. The
vertices of degree 3 that we may encounter are marked; those of degree
zero are deleted (i.e. we apply \remref{e:r2}). If a vertex $u$ of
degree two is incident with an $\ell$-redundant edge, then we contract
this (that is rule \remref{e:r1}) and keep the obtained vertex unmarked. That way we make sure that further contractions will be applied if the resulting subdivision path still has length more than~$\ell$.
Lastly, we delete every vertex $v$ of degree one (rule \remref{e:r2}). Observe that when the neighbor $u$ of $v$ has degree less than four, the deletion of $v$ might create a long subdivided path, or a vertex of degree one. By unmarking $u$ we ensure that these cases will be considered in a later step.

\begin{algorithm}
\DontPrintSemicolon
\KwIn{a graph $G$ and an integer $\ell$}
\KwOut{an $\ell$-reduced graph}
$V:=V(G)$\;
$S:= \emptyset$\; 
\While{$S \neq V$}{
pick $v \in V \backslash S$\;
\lIf{$\deg(v) \geq 3$}{$S:=S \cup \{v\}$}
\uElseIf{$\deg(v) = 2$}{
\uIf{for some $u \in \neigh(v),$ $N(u) \cap N(v)= \emptyset$ and
  $\{u,v\}$ belongs to a  subdivision path of length more than~$\ell$}
{contract edge $\{u,v\}$ and keep the resulting vertex in $V \backslash S$\;}
\lElse{$S=S \cup \{v\}$}
}
\uElseIf{$\deg(v) = 1$}{
\uIf{the only $u \in \neigh(v)$ is such that $u \in S$ and $\deg(u) \leq 3$}
{delete $v$ and delete $u$ from $S$\;}
\lElse{$V=V \backslash \{v\}$}
}
\lElse{delete $v$ (which in this case is isolated)}
}
\caption{Reduction algorithm.}
\label{a:redalg}
\end{algorithm}

\begin{lemma}\label{l:polyred}
  \autoref{a:redalg} runs in $O(n\ell)$ time and outputs an $\ell$-reduced graph.
\end{lemma}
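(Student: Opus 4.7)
The plan is to bound the number of loop iterations by a potential argument, check the work per iteration, and then verify that the output graph is $\ell$-reduced. For the running time I use the potential $\Phi = 2\card{V} - \card{S}$, which starts at $2n$ and is always non-negative. A case check on the possible iterations---marking a vertex, deleting a degree-$0$ or degree-$1$ vertex (with or without simultaneously unmarking a neighbor), and contracting an edge---shows that $\Phi$ decreases by at least one per iteration: the crucial point is that an unmarking is only ever triggered by the deletion of a vertex, so the $-1$ contributed to $\Phi$ by unmarking is offset by the $-2$ coming from the vertex deletion. Hence there are at most $2n$ iterations. The only non-constant work inside an iteration is the subdivision-path test for a degree-$2$ vertex, which can be done in $O(\ell)$ time by walking at most $\ell$ edges in each direction along the path. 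The total running time is therefore $O(n\ell)$.

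For correctness, at termination $V = S$, and I need to check the two defining properties of an $\ell$-reduced graph. First, no vertex has degree at most $1$: any vertex of degree $0$ or $1$ processed in the loop is deleted on the spot, and if a marked vertex's degree drops to $1$ or $0$ via the deletion of a neighbor (contractions leave all other vertices' degrees unchanged, thanks to the disjoint-neighborhoods condition), the unmarking rule places it back in $V \setminus S$, after which it is re-picked and eventually deleted. Second, no $\ell$-redundant edge remains, for which I would maintain the invariant that every degree-$2$ vertex currently in $S$ has no $\ell$-redundant incident edge. It is enforced at the moment a vertex is marked, and is preserved by contractions: contracting an $\ell$-redundant edge $\{u,v\}$ with $\neigh(u) \cap \neigh(v) = \emptyset$ can only shorten a subdivision path by one, and cannot destroy any triangle (the disjoint-neighborhood condition forbids triangles containing $\{u,v\}$). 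The invariant is temporarily broken only by a deletion that turns a degree-$3$ endpoint $v_0$ of some subdivision path into a degree-$2$ vertex, thereby merging two paths; but this very deletion unmarks $v_0$, and the algorithm then iteratively contracts through the resulting unmarked vertex---each contraction shortening the merged path by one---until the path has length at most~$\ell$, at which point the current unmarked vertex is added back to $S$. Hence at termination no such ``merging episode'' is outstanding and the invariant holds everywhere. An $\ell$-redundant edge in the final graph would lie on a subdivision path of length $>\ell$; a short combinatorial check shows that only internal vertices adjacent to an endpoint of such a path can be in a triangle with their two path-neighbors (otherwise one of those neighbors would be forced to have degree $\geq 3$, contradicting internality), so some marked interior vertex has an $\ell$-redundant incident edge, contradicting the invariant.

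The main obstacle will be formalizing the ``merging episode'' argument---in particular verifying that the unmarked vertex created by a merge stays on the offending long path while the cascade of contractions proceeds, and that interleaved unmarking or deletion events in other parts of the graph do not prevent the episode from completing before the algorithm terminates. This boils down to tracking which vertices along the affected path are marked and which degrees they have at each step of the cascade, together with the fact that after any merge the combined path has length at most $2\ell$, so finitely many contractions suffice.
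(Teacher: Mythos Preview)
Your approach is essentially the paper's: the same potential $\Phi = 2\card{V} - \card{S}$ is used, with the same observation that it drops by $1$ or $2$ per iteration and stays non-negative, and the same remark that the only non-constant work per iteration is the $O(\ell)$ walk along a subdivision path. So the running-time half is a match.

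Where you diverge is in the correctness half. The paper's entire argument is one sentence: ``the set $S$ does not contain any vertex of degree $1$ nor any redundant edges, hence the resulting graph is $\ell$-reduced.'' You instead set up an explicit invariant on degree-$2$ vertices in $S$ and analyze how deletions can temporarily violate it by merging two subdivision paths, then argue that the resulting ``cascade'' of contractions through the unmarked vertex repairs the violation before termination. This is a genuine elaboration rather than a different route: the paper simply asserts what you take pains to justify. Your observation that the merged path has length at most $2\ell$ (since each half had length at most $\ell$ before the merge) is exactly what makes the cascade terminate, and your tracking of which deletions trigger unmarking is the mechanism the paper leaves implicit. The ``obstacle'' you flag---interleaving with events elsewhere in the graph---is real but harmless: what matters is only that at termination $V = S$, so every unmarked vertex produced by a merge has eventually been processed; the order in which the algorithm gets around to it is irrelevant. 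In short, your plan is sound, and more careful than the paper's own proof; the paper would regard the invariant you propose as self-evident and does not spell it out.
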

\begin{proof}
  Every step in the while loop is performed in constant time, except
  checking if an edge belong to a subdivision path of length more than $\ell,$ which takes time $O(
  \ell)$. It is easy to check that each iteration of the while loop decreases the quantity $2|V|-|S|$ by 1 or 2, 
  and since $2|V|-|S| \geq |V| \geq 0$,
  the algorithm will perform the loop at most $2 |V(G)|$ times. This
  means that the algorithm is linear. Also notice, that the set $S$ does
  not contain any vertex of degree 1 nor any redundant edges.
  Hence the resulting graph is $\ell$-reduced as required.
\end{proof}

\begin{remark}\label{r:dmax}
 Reducing a graph does not increase its maximum degree.
\end{remark}

We now prove a generic lemma that will be used in the subsequent proofs.
\begin{lemma}\label{l:generic}
 Let $G$ be an $\ell$-reduced graph, let $r\in \N$, let $C$ be a cycle
 of $G$ of length $g=\girth(G)$. Let $h = \floor{\frac{g}{4}}$. For $i \in \intv{0}{\ell},$ we denote
 by $N_i$ the set of vertices at distance $i$ from $C.$ We also set $R
 = \vertices(G) \setminus (\vertices(C) \cup \bigcup_{i=1}^{\ell-1}
 N_i)$. Then we have:
  \begin{enumerate}[(i)]
  \item if $G$ has no $\ell$-packing of $r$ cycles, then $G[R]$ has no
    $\ell$-packing of $r-1$ cycles;\label{e:g1}
  \item $\forall i \in \intv{1}{h-1}$, every vertex of $N_i$ has
    exactly one neighbor in $N_{i-1}$;\label{e:g2}
  \item $\forall i \in \intv{1}{h-1}$, $N_i$ is independent;\label{e:g3}
  \item $\forall i \in \intv{1}{h-2}$, $|N_{i+1}| \geq |N_{i}|$;\label{e:g4}
  \item $\forall i \in \intv{1}{h-1-\ell}$, $|N_{i+\ell}| \geq
    2|N_{i}|$;\label{e:g5}
  \item if $g \geq 6$ then
  $|\vertices(G)| \geq g \cdot 2^{\floor{\frac{g}{4\ell}} - 1}$.\label{e:g6}
  \end{enumerate}
\end{lemma}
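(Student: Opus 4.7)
The plan is to treat the six items in order, since the later ones reuse the earlier ones. Item~\itemref{e:g1} is essentially definitional: every vertex of $R$ lies at distance at least $\ell$ from $C$ in $G$ because $R$ omits $\vertices(C)\cup N_1\cup\cdots\cup N_{\ell-1}$, so any $\ell$-packing of $r-1$ cycles inside $G[R]$ is extended by $C$ itself to an $\ell$-packing of $r$ cycles in $G$, which is the contrapositive of~\itemref{e:g1}.

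Items~\itemref{e:g2} and~\itemref{e:g3} I would establish by girth arguments. For~\itemref{e:g2}, if $v\in N_i$ had two distinct neighbors $u,u'\in N_{i-1}$, then concatenating the edges $uv$, $u'v$, shortest paths of length $i-1$ from $u$ and $u'$ to $C$, and (if their endpoints on $C$ differ) the shorter of the two arcs of $C$ between them yields a closed walk of length at most $2i+g/2\le 2(h-1)+g/2<g$, which contains a simple cycle of the same or smaller length and therefore violates the girth. The same style of argument gives~\itemref{e:g3}, producing a cycle of length at most $2i+1+g/2<g$ from any edge inside $N_i$. Item~\itemref{e:g4} then follows at once: by~\itemref{e:g2} each $v\in N_i$ has exactly one neighbor in $N_{i-1}$ and by~\itemref{e:g3} none in $N_i$, so (since $\mindeg(G)\ge 2$ by reducedness) at least one in $N_{i+1}$; mapping each vertex of $N_{i+1}$ to its unique parent in $N_i$ is therefore a surjection.

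Item~\itemref{e:g5} is the central step and the only place where $\ell$-reducedness is really used. I would argue by contradiction: if some $v\in N_i$ had at most one descendant at depth $\ell$ in the BFS forest rooted at $C$, then~\itemref{e:g4} applied at each level below $v$ would force a unique downward chain $v=v_0-v_1-\cdots-v_\ell$ in which each $v_j$ with $j<\ell$ has a single child. Items~\itemref{e:g2} and~\itemref{e:g3} would then give that each $v_j$ for $0\le j\le\ell-1$ has degree exactly $2$ in $G$. Prepending the unique parent of $v_0$ in $N_{i-1}$ produces a subdivision path of length $\ell+1$; since the non-vacuity of~\itemref{e:g5} forces $g\ge 4\ell+8$, the graph has no triangle, so these edges are all $\ell$-redundant, contradicting the reducedness of $G$.

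For~\itemref{e:g6} the main obstacle is obtaining the factor of $g$ in front of the exponential; for that I would first prove an auxiliary bound $|N_1|\ge g/\ell$. The argument mirrors that for~\itemref{e:g5}: since $C$ is chord-free by the girth, $\ell$ consecutive degree-$2$ vertices of $C$ would extend to a subdivision path of length $\ell+1$ along $C$ whose edges are $\ell$-redundant. Hence at least $g/\ell$ vertices of $C$ have a neighbor in $N_1$, and for $g\ge 5$ those neighbors are distinct, giving $|N_1|\ge g/\ell$. Iterating~\itemref{e:g4} and~\itemref{e:g5} yields $|N_i|\ge (g/\ell)\cdot 2^{\lfloor(i-1)/\ell\rfloor}$ for every $1\le i\le h-1$. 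Summing this block-wise --- each block of $\ell$ consecutive levels contributing at least $g\cdot 2^k$ --- delivers $|\vertices(G)|\ge g+g(2^m-1)=g\cdot 2^m$ with $m=\lfloor(h-1)/\ell\rfloor$, and a routine check that $m\ge\lfloor g/(4\ell)\rfloor-1$ (using $h=\lfloor g/4\rfloor$) gives the stated bound.
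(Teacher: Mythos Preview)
Your proposal is correct and follows essentially the same route as the paper: item~\itemref{e:g1} by the contrapositive, items~\itemref{e:g2}--\itemref{e:g3} by short-cycle/girth arguments, item~\itemref{e:g4} from $\mindeg(G)\ge 2$ together with \itemref{e:g2}--\itemref{e:g3}, item~\itemref{e:g5} by locating a subdivision path of length $\ell+1$ whenever some $v\in N_i$ has a unique depth-$\ell$ descendant, and item~\itemref{e:g6} by first bounding $|N_1|\ge g/\ell$ via reducedness along $C$ and then summing the geometric growth. The only cosmetic differences are that you are more explicit than the paper about the triangle-free condition in~\itemref{e:g5} (pointing out that non-vacuity forces $g\ge 4\ell+8$) and that your block-wise summation in~\itemref{e:g6} yields exponent $\lfloor(h-1)/\ell\rfloor$ where the paper obtains $\lfloor(h-2)/\ell\rfloor+1$; both are easily seen to dominate $\lfloor g/(4\ell)\rfloor-1$.
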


\begin{proof}
  \Itemref{e:g1} follows from the fact that every $\ell$-packing of $r-1$ cycles in
  $G[R]$ is $\ell$-distant from $C$, whereas we assume that $G$ has no
  $\ell$-packing of $r$ cycles.

  \textit{Proof of \Itemref{e:g2}.} By definition of $N_i$, for every integer $i$ such
  that $1 \leq i < h-1$, every vertex $v \in N_i$ has a neighbour in
  $N_{i-1}$.  Let us show that $v$ has exactly one neighbour in
  $N_{i-1}$. For this, we suppose for contradiction that $v$ has two
  neighbours $u, w \in N_{i-1}$ with $u \neq w$.  Then, there are two
  distinct paths of length $i$ from $v$ to the cycle $C$. If these paths
  have a vertex in common, then walking from $v$ along the first path
  until we reach the first vertex belonging to the second path and
  taking the second path back to $v$ would form a cycle of length at
  most $2i \leq 2 (h-1) < g$, a contradiction. On the other hand, if the
  two paths are vertex disjoint, consider their endpoints, say $x$ and
  $y$ which belong to the cycle $C$. In this case, the two paths
  together with the shortest path in $C$ between $x$ and $y$, which has
  length at most $\floor{\frac{g}{2}}$, would create a cycle of length
  $2i+\floor{\frac{g}{2}} \leq 2(h-1)+\floor{\frac{g}{2}} <g$, a
  contradiction. Thus we have proved that every vertex in $N_i$ has
  exactly one neighbour in $N_{i-1}$.

  \textit{Proof of \Itemref{e:g3}.} The argument is very similar to the one used in the
  proof of \Itemref{e:g2}: if there is an edge $\{u,v\}$ for some $u,v
  \in N_i$, then the paths respectively connecting $u$ and $v$ to $C$
  either intersect, what yields a cycle of length at most $1 +
  2(h-1)<g$, or they are vertex-disjoint, in which case we can build
  as above a cycle of length at most $1 + 2i+\floor{\frac{g}{2}} \leq
  1+ 2(h-1)+\floor{\frac{g}{2}} <g$.

  \textit{Proof of \Itemref{e:g4}.} As $G$ is $\ell$-reduced, every vertex
  of
  $N_i$ has degree at least two. Together with \Itemref{e:g2} and
  \Itemref{e:g3}, this implies that every vertex of $N_i$ has a neighbor
  in $N_{i+1}$. According to \Itemref{e:g2}, every distinct vertices
  $u,v \in N_i$, have disjoint neighborhoods in $N_{i+1}$. Hence
  $|N_{i+1}|\geq |N_i|$.

  \textit{Proof of \Itemref{e:g5}.}
Let us now show that the cardinality of the $N_i$'s is increasing as
follows: for every $i \in \intv{1}{h-\ell-1},$ $|N_{i + \ell}| \geq
2|N_i|$.
For every $v \in N_i$, let $S_v \subseteq N_{i + \ell}$ be the subset
of vertices at distance $\ell$ from $v$ in $\induced{G}{\bigcup_{p=i}^{i+\ell} N_p}$. By the structural description above if follows that every two distinct $u,v
\in N_i$ yields two disjoint non-empty sets $S_u$ and $S_v$. Also, by definition
of the $N_i$'s, every vertex of $N_{i+\ell}$ belongs to $S_u$, for some $u \in
N_i$. Therefore $\{S_u,\ u \in N_i\}$ is a partition of $N_{i + \ell}$
into $\card{N_i}$ subsets. If $\card{N_{i + \ell}} < 2 \card{N_i}$,
then there is a vertex $u \in N_i$ such that $S_u$ contains only one
vertex, that we call $v$. Let $w$ be the (unique) neighbor of $u$ in
$N_{i - 1}$. Then every interior vertex of the unique path linking $w$
to $v$ has degree two, and this path has length $\ell + 1$. This
contradicts the fact that $G$ is reduced, and thus~$|N_{i + \ell}| \geq
2|N_i|$.

  \textit{Proof of \Itemref{e:g6}.}
As a consequence of \Itemref{e:g5}, for every $i \in
\intv{1}{h-1},$ we have $|N_{i}| \geq
|N_{1}|\cdot 2^{\floor{(i - 1)/\ell}}.$ Since every subdivision path of $C$
has length at most $\ell$ (as $G$ is $\ell$-reduced), $|N_1| \geq
\ceil{\frac{g}{\ell}},$ hence for every $i \in \intv{1}{h-1},$
$|N_{i}| \geq \ceil{\frac{g}{\ell}} \cdot 2^{\floor{(i-1)/\ell}} \geq
\frac{g}{\ell} \cdot 
2^{\floor{(i-1)/\ell}}.$ We are now able to give a lower bound on the number of
vertices of $G$:
\begin{align*}
  |\vertices(G)| &\geq \sum_{i = 0}^{h - 1} |N_i|
  \geq g + \sum_{i = 1}^{h-1} \frac{g}{\ell}  \cdot 2^{\floor{(i-1)/\ell}}\\
  &\geq g + g \sum_{i = 0}^{\floor{\frac{h-2}{\ell}}} 2^i
  = g \cdot2^{\floor{\frac{h-2}{\ell}}+1}
  \geq g \cdot 2^{\floor{\frac{g}{4\ell}}-1}.
\end{align*}
\end{proof}

\begin{lemma}\label{l:tree-size}
  If $T$ is a tree with $s \geq 2$ leaves and no $\ell$-redundant edge,
  then $|\vertices(T)| \leq 2\ell s - 3 \ell +1$.
\end{lemma}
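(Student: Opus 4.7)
The plan is to reduce the problem to counting in a suitably contracted tree. First I would observe that in a tree no edge can belong to a triangle, so the hypothesis ``no $\ell$-redundant edge'' is equivalent here to the simpler statement that every maximal subdivision path of $T$ has length at most $\ell$.

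Next I would form the \emph{topological minor} $T'$ obtained from $T$ by suppressing every degree-$2$ vertex, i.e.\ by contracting each maximal subdivision path to a single edge. Then $T'$ is a tree whose vertices all have degree $1$ or $\geq 3$. Its leaves are exactly the leaves of $T$, so $T'$ has $s$ leaves; let $b$ denote its number of branching vertices. By the handshake formula applied to $T'$, writing $n_k$ for the number of vertices of degree $k$ in $T'$,
\begin{equation*}
  s - 2 \;=\; n_1 - 2 \;=\; \sum_{k\geq 3}(k-2)\,n_k \;\geq\; \sum_{k\geq 3} n_k \;=\; b,
\end{equation*}
so $b\leq s-2$. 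Consequently $T'$ has $s+b\leq 2s-2$ vertices and $s+b-1\leq 2s-3$ edges.

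Finally, I would translate the size of $T'$ back to the size of $T$. Each edge of $T'$ corresponds to a maximal subdivision path of $T$ of length at most $\ell$, hence containing at most $\ell-1$ internal (degree-$2$) vertices. Summing,
\begin{equation*}
  |\vertices(T)| \;\leq\; |\vertices(T')| + (\ell-1)\cdot|\edges(T')| \;\leq\; (s+b) + (\ell-1)(s+b-1).
\end{equation*}
Using $b\leq s-2$, the right-hand side is at most $1 + \ell(s+b-1) \leq 1 + \ell(2s-3) = 2\ell s - 3\ell + 1$, which is the claimed bound.

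I do not anticipate a serious obstacle: the only subtle point is the inequality $b\leq s-2$, which must be proved for $T'$ (where branching vertices have degree $\geq 3$) rather than $T$, and one must remember to count the internal vertices of each subdivision path as $\ell-1$ rather than $\ell$. Otherwise the argument is a direct two-step count.
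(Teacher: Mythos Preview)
Your proposal is correct and follows essentially the same approach as the paper: form the topological minor $T'$ by dissolving all degree-$2$ vertices, bound $|\vertices(T')|\leq 2s-2$ via the handshake lemma, and then recover $|\vertices(T)|\leq |\vertices(T')|+(\ell-1)|\edges(T')|$. The paper phrases the handshake step as $2(s+t-1)\geq s+3t$ with $t$ the number of internal vertices of $T'$, which is exactly your $b\leq s-2$ rewritten.
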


\begin{proof}
  Let $T$ be a tree as in the statement of the lemma and let $T'$ be the
  tree obtained from $T$ by dissolving every vertex of degree 2 (contracting an edge containing a vertex of degree 2 until no vertices of degree 2 are left). We
  denote by $s\geq 2$ the number of leaves in $T'$ (which remains the same
  as in~$T$) and by $t$ the number of internal vertices of $T'$.
  Since $T'$ is a tree, we have:
  \begin{align*}
    |\edges(T')| &= |\vertices(T')| - 1 = s + t - 1.
  \end{align*}
   This together with handshaking lemma and observation that every internal vertex has degree at least 3 imply:
\begin{align*}
    2(s + t -1) &= 2 |\edges(T')| = \sum_{v \in \vertices(T')}
    \deg(v) \geq s + 3t.
\end{align*}
Subtracting $2t$ from both sides we get the following bound on the number of vertices of~$T'$:
\begin{align*}
    2s -2 &\geq s + t = |\vertices(T')|.
  \end{align*}
  Now, observe that since $T$ does not contain an $\ell$-redundant edge, it has
  at most $(\ell - 1)|\edges(T')|$ vertices of degree two, and
  hence
  \begin{align*}
    |\vertices(T)| &\leq |\vertices(T')| + (\ell - 1)|\edges(T')|
                    = \ell |\vertices(T')| - \ell + 1
                    \leq 2\ell s - 3 \ell +1.
  \end{align*}
 
\end{proof}

\begin{corollary}\label{c:forest-size}
  If $F$ is a forest with $s \geq 2$ leaves or isolated vertices and
  without $\ell$-redundant edges, then $|\vertices(F)| \leq 2 \ell s -
  3\ell +1.$
\end{corollary}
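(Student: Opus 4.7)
The plan is to reduce the corollary to Lemma~\ref{l:tree-size} by analyzing $F$ component-wise. Since $F$ is a forest, each connected component is a tree; I would partition these components into the $a$ components that are single isolated vertices and the $b$ components $T_1,\ldots,T_b$ that have at least two vertices, hence at least two leaves. If $s_i$ denotes the number of leaves of $T_i$, then $s = a + \sum_{i=1}^{b} s_i$ and $|\vertices(F)| = a + \sum_{i=1}^{b}|\vertices(T_i)|$.

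Since each $T_i$ inherits from $F$ the property of having no $\ell$-redundant edges, Lemma~\ref{l:tree-size} gives $|\vertices(T_i)| \leq 2\ell s_i - 3\ell + 1$ for every $i \in \intv{1}{b}$. Summing and using the above decomposition of $s$ yields
\[
|\vertices(F)| \;\leq\; a + \sum_{i=1}^{b}\bigl(2\ell s_i - 3\ell + 1\bigr) \;=\; 2\ell s - (2\ell - 1)a - (3\ell - 1)b,
\]
so the desired bound $|\vertices(F)| \leq 2\ell s - 3\ell + 1$ reduces to the arithmetic inequality $(2\ell - 1)a + (3\ell - 1)b \geq 3\ell - 1$.

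If $b \geq 1$, then the term $(3\ell - 1)b$ alone already achieves the inequality. The only other case is $b = 0$, in which $F$ consists entirely of isolated vertices, so $s = a \geq 2$ by hypothesis; then $(2\ell - 1)a \geq 4\ell - 2 \geq 3\ell - 1$ since $\ell \geq 1$. The main (mild) obstacle is precisely this case split: one must remember that an isolated vertex does not fall under the scope of Lemma~\ref{l:tree-size} (applying the tree bound to a single vertex would give the vacuous value $1 - \ell$), so the hypothesis $s \geq 2$ has to be invoked directly when $F$ has no nontrivial components.
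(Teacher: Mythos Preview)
Your argument is correct. Both your proof and the paper's reduce to Lemma~\ref{l:tree-size}, but the executions differ. You apply the lemma separately to each component on at least two vertices and then sum, reducing the conclusion to the elementary inequality $(2\ell-1)a + (3\ell-1)b \geq 3\ell-1$. The paper instead first merges all components of order at least~$3$ into a single tree by adding edges between internal vertices, applies the lemma once to that tree, and then adds back the vertices of the small components; the case where no component has order~$\geq 3$ is handled by a direct check that $s \leq 2\ell s - 3\ell + 1$. Your per-component summation is arguably cleaner: it avoids the merging step (and the implicit verification that the merged tree still has no $\ell$-redundant edges), and it also handles $K_2$-components uniformly via the lemma rather than lumping them with the isolated vertices. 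The paper's approach, on the other hand, invokes the lemma only once.
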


\begin{proof}
  First observe that if $F$ has no connected component of order at
  least~3, then we have $|V(F)|=s \leq 2ls-3l+1$. The latter inequality holds for all $s \geq
  2$ and $l \geq 1$ and one can verify it by observing that it is equivalent to $0 \leq 2(l-1)(s-2)+(l-1)+(s-2)$. 
 On the other hand, if $F$ has a connected component of order at
  least~3, we can add edges between internal vertices of different
  connected components of order at least~3 in order to obtain a forest
  with the same vertex set and the same number of isolated vertices
  and leaves and containing exactly one tree $T$ on at least 3
  vertices. If $s_1$ is the number of leaves in $T$, then by
  \autoref{l:tree-size} we have $|V(T)| \leq 2\ell s_1 - 3\ell +
  1$. The rest of the forest (consisting of components of order 1 and
  2) contains $s - s_1$ vertices. Hence $|V(F)| \leq 2\ell s_1 - 3\ell +
  1 + s - s_1 \leq 2\ell s - 3\ell + 1$.
\end{proof}

\begin{lemma}\label{l:v-ub1}
  If~$G$ is an $\ell$-reduced graph not containing two
  $\ell$-distant cycles, then $|\vertices(G)| < 2\ell \girth(G)
  \maxdeg(G)^{\ell}.$
\end{lemma}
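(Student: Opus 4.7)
\emph{Proof plan.} The plan is to decompose $V(G)$ around a shortest cycle into a bounded-radius ball and a remainder that (by \autoref{l:generic}) is forced to be a forest, and then bound each part separately using \autoref{l:generic} and \autoref{c:forest-size}. Let $g=\girth(G)$ and $\Delta=\maxdeg(G)$, fix a cycle $C$ of length $g$ in $G$, and introduce the distance layers $N_i$ and the remainder $R=\vertices(G)\setminus(\vertices(C)\cup\bigcup_{i=1}^{\ell-1}N_i)$ as in \autoref{l:generic}. Since by hypothesis $G$ admits no pair of $\ell$-distant cycles, \Itemref{e:g1} applied with $r=2$ tells us that $G[R]$ is cycle-free, hence a forest.

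I would then write $|\vertices(G)|=|A|+|R|$ with $A=\vertices(C)\cup\bigcup_{i=1}^{\ell-1}N_i$ and bound the two pieces. For $|A|$: every vertex of $N_{i-1}$ has at most $\Delta$ neighbors in $N_i$, so an induction from $|N_0|=g$ yields $|N_i|\leq g\Delta^i$ and hence $|A|\leq g\sum_{i=0}^{\ell-1}\Delta^i$. For $|R|$: because $G$ is reduced, every leaf or isolated vertex of the forest $G[R]$ has $G$-degree at least~$2$, so has at least one neighbor outside $R$, which must lie in $N_{\ell-1}$. Hence the number $s$ of leaves plus isolated vertices of $G[R]$ is bounded by the number of edges between $N_{\ell-1}$ and $R$, which is at most $\Delta|N_{\ell-1}|\leq g\Delta^\ell$; applying \autoref{c:forest-size} then yields $|R|\leq 2\ell s\leq 2\ell g\Delta^\ell$.

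Summing gives $|\vertices(G)|\leq g\sum_{i=0}^{\ell-1}\Delta^i+2\ell g\Delta^\ell$, which is already of the correct order $2\ell g\Delta^\ell$. The main obstacle will be squeezing out exactly the constant $2\ell$ rather than $2\ell+1$: the naive argument above leaves an extra $g\Delta^\ell$ term. I expect this can be absorbed by using a sharper ``$\Delta-1$ branching'' coming from the fact that every $N_i$-vertex already has a parent edge in $N_{i-1}$, and similarly that every $\vertices(C)$-vertex has two cycle-neighbors and so contributes at most $\Delta-2$ edges to $N_1$. These refinements should give $|A|\leq g(\Delta-1)^{\ell-1}$ and $s\leq g(\Delta-2)(\Delta-1)^{\ell-1}$, whereupon the strict inequality $|\vertices(G)|<2\ell g\Delta^\ell$ follows by elementary comparison (with the corner cases $\Delta\leq 2$ or $\ell=1$ handled directly, as the reduced graph is then either empty or essentially just $C$).
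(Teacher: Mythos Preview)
Your overall strategy is the paper's: decompose around a shortest cycle, use \Itemref{e:g1} of \autoref{l:generic} with $r=2$ to see that $G[R]$ is a forest, bound the layers by a branching argument, and bound the forest via \autoref{c:forest-size}. The sharper $(\Delta-2)/(\Delta-1)$ branching you anticipate at the end is exactly what the paper uses to land on the constant~$2\ell$.

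There is, however, one genuine gap. \autoref{c:forest-size} applies only to forests \emph{without $\ell$-redundant edges}, and $G[R]$ need not satisfy this hypothesis. A vertex $v\in N_\ell$ can have degree at least~$3$ in $G$ (with one neighbour in $N_{\ell-1}$) yet degree exactly~$2$ in $G[R]$; such a vertex then becomes an interior vertex of a subdivision path of $G[R]$ that was \emph{not} a subdivision path of $G$, and nothing prevents these new subdivision paths from having length greater than~$\ell$. So you cannot invoke \autoref{c:forest-size} on $G[R]$ directly, and without it the bound $|R|\le 2\ell s$ is unjustified.

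The paper repairs this by passing to an auxiliary forest $R^+$: to $G[R]$, attach a fresh pendant leaf to every vertex of $N_\ell$ that has degree~$2$ in $G[R]$. Any subdivision path of $G[R]$ of length more than~$\ell$ must contain such a vertex in its interior (otherwise all its interior vertices lie in $R\setminus N_\ell$, keep their $G$-degree~$2$, and the path would already witness an $\ell$-redundant edge in $G$); the pendant raises that vertex's degree to~$3$ and so breaks the path. Hence $R^+$ has no $\ell$-redundant edge, and its leaves and isolated vertices are either original leaves/isolates of $G[R]$ (all in $N_\ell$) or new pendants (one per degree-$2$ vertex of $N_\ell$), so they number at most~$|N_\ell|$. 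Now \autoref{c:forest-size} gives $|R|\le|\vertices(R^+)|\le 2\ell|N_\ell|-3\ell+1$, and combining this with $|N_\ell|\le g(\Delta-2)(\Delta-1)^{\ell-1}$ and $|\vertices(C)|+\sum_{i=1}^{\ell-1}|N_i|\le g(\Delta-1)^{\ell-1}$ yields the strict inequality $|\vertices(G)|<2\ell g\Delta^\ell$ directly, with no separate treatment of small~$\Delta$ or~$\ell$ needed.
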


\begin{proof}
  Observe that as $G$ is $\ell$-reduced, it contains two $\ell$-distant
  cycles as soon as it has more than one connected
  components. Therefore we shall now assume that $G$ is connected.
  Let $C$ be a cycle in $G$ of length $g=\girth(G)$. We define $N_i$ for every $i \in
  \intv{0}{\ell}$ and $R$ as in \autoref{l:generic} and we set $\Delta = \maxdeg(G)$. According to \hyperref[e:g1]{Item~(\ref*{e:g1}) of \autoref*{l:generic}}, $R$ induces a forest in $G$.
  Also notice that every
  leaf or isolated vertex of $\induced{G}{R}$ belongs to $N_{\ell},$
  otherwise it would have degree at most one in $G,$ which would contradict the fact
  that $G$ is $\ell$-reduced. Besides, if $\induced{G}{R}$ has a subdivision path
  of length more than $\ell$, at least one of its internal vertices must belong to
  $N_\ell$ (and have neighbors in $N_{\ell-1}$), otherwise it would
  contradict the fact that $G$ is $\ell$-reduced. Let us consider the graph $R^+$ constructed from $\induced{G}{R}$ by adding a neighbor of
  degree one to each vertex of $N_\ell$ which has degree two in $G[R]$. Now $R^+$ is a
  forest which has at most $|N_\ell|$ leaves or isolated vertices and has
  no $\ell$-redundant edge: by \autoref{c:forest-size} we
  have~$|\vertices(R^+)| \leq 2\ell |N_\ell| - 3\ell + 1$.

  The cycle $C$ has $g$ vertices each of degree at most $\Delta$
  and with two neighbors in $C,$ therefore $|N_1| \leq g (\Delta -
  2)$ and by a similar argument we obtain~$|N_i| \leq g(\Delta -
  2)(\Delta - 1)^{i-1}$ for every $i\in \intv{2}{\ell}$. We are now able
  to give an upper-bound on the order of $G$:
  \begin{align*}
  |\vertices(G)| &= |\vertices(C)| + \sum_{i = 1}^{\ell - 1}|N_i| + |R| \\
  & \leq |\vertices(C)| + g((\Delta-1)^{\ell - 1} - 1) + |\vertices(R^+)|\\
  & \leq g(\Delta-1)^{\ell - 1} + 2\ell g(\Delta -
  2)(\Delta - 1)^{\ell-1} -3\ell +1\\
  & \leq 2 \ell g \Delta^\ell - g (\Delta-1)^{\ell - 1} -2\ell\\
  & < 2 \ell g \Delta^\ell.
  \end{align*}
\end{proof}

Now we show that reduced graphs without two $\ell$-distant cycles must have small girth.
\begin{lemma}\label{l:small-g}
  If~$G$ is an $\ell$-reduced graph not containing two
  $\ell$-distant cycles, then $\girth(G) \leq 8 \ell - 4$.
\end{lemma}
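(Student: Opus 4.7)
The plan is to argue by contradiction. Assume $g := \girth(G) \geq 8\ell - 3$ and fix a shortest cycle $C$, so that $h := \lfloor g/4 \rfloor \geq 2\ell - 1$. Define the BFS layers $N_i$ and the set $R$ exactly as in \autoref{l:generic}. If $G[R]$ contained a cycle $C'$, then by definition of $R$ the cycle $C'$ would be at distance at least $\ell$ from $C$ in $G$, yielding two $\ell$-distant cycles and contradicting our hypothesis; so $G[R]$ is a forest, which I will call $F$.

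To bound $|V(F)|$ from above, I build $F^+$ from $F$ by attaching a pendant to each $v \in N_\ell$ with $\deg_F(v) = 2$, exactly as in the proof of \autoref{l:v-ub1}. A short case analysis shows that $F^+$ has no $\ell$-redundant edge: the internal vertices of any subdivision path of length $>\ell$ in $F^+$ have degree $2$ in $F^+$, and by inspecting the three possibilities (vertices of $N_\ell$, vertices of $R\setminus N_\ell$, and the new pendants) this forces them to lie in $R \setminus N_\ell$ with degree $2$ in $G$ as well, so such a path would be a subdivision path of $G$ of length $>\ell$, contradicting $\ell$-reducedness. \autoref{c:forest-size} then yields $|V(F^+)| \leq 2\ell|N_\ell| - 3\ell + 1$.

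For the matching lower bound, I combine items \itemref{e:g4} and \itemref{e:g5} of \autoref{l:generic}: the monotonicity $|N_\ell| \leq |N_{\ell+1}| \leq \cdots \leq |N_{h-1}|$ together with the doubling $|N_{i+\ell}| \geq 2|N_i|$ give, after partitioning $\intv{\ell}{h-1}$ into blocks of length $\ell$, a lower bound on $\sum_{i \geq \ell} |N_i| \leq |V(F)| \leq |V(F^+)|$ that grows with $h$. Comparing with the upper bound forces $h \leq 2\ell - 1$, i.e.\ $g \leq 8\ell - 1$. The residual cases $g \in \{8\ell-3, 8\ell-2, 8\ell-1\}$ are then handled by noting that the short-cycle argument used to prove \itemref{e:g2} in fact extends to layer $N_h$ itself whenever $g \not\equiv 0 \pmod 4$; plugging this sharpened structural information back into the counting gives the final bound $g \leq 8\ell - 4$.

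The hard part will be matching the upper and lower bounds tightly: a naive use of \itemref{e:g4} alone gives only $g \leq 12\ell + 3$, and closing this gap to $8\ell - 4$ requires exploiting \itemref{e:g5} as aggressively as possible over several blocks simultaneously and then performing the mod-$4$ case analysis above to squeeze out the last three possible values of $g$.
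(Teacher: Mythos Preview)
Your counting approach is genuinely different from the paper's direct construction, but the central comparison step does not yield the bound you claim. Concretely, suppose $h = 2\ell$. Then the layers contributing to your lower bound are $N_\ell,\dots,N_{2\ell-1}$, and item~\itemref{e:g5} is useless on this range: it requires $i+\ell \leq h-1 = 2\ell-1$, i.e.\ $i \leq \ell-1$, so no doubling happens inside $\intv{\ell}{2\ell-1}$. You are left with the monotonicity \itemref{e:g4} only, giving $\sum_{i=\ell}^{2\ell-1}|N_i| \geq \ell\,|N_\ell|$, which sits comfortably below the upper bound $2\ell|N_\ell|-3\ell+1$ and produces no contradiction. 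Pushing the doubling further (partitioning $\intv{\ell}{h-1}$ into blocks of length~$\ell$ as you suggest) one finds that the sum $\sum_{j=0}^{m-1}2^{\lfloor j/\ell\rfloor}$ first exceeds $2\ell$ only around $m \approx 3\ell/2$, i.e.\ $h \approx 5\ell/2$, so the best this comparison can deliver is roughly $g < 10\ell$, not $g \leq 8\ell-1$. The subsequent mod-$4$ case analysis therefore starts from a false premise and cannot rescue the argument.

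The paper avoids counting altogether. Assuming $g > 8\ell-4$, it picks a maximal path in a component of the forest $G[R]$, locates near one endpoint a branching vertex $u$ of degree~$\geq 3$ at distance $\leq \ell-1$ (using $\ell$-reducedness), and branches off from $u$ to a second leaf $v'$; maximality of the path forces the two leaves $v,v'$ to be at distance $\leq 2(\ell-1)$ in $G[R]$. Connecting $v$ and $v'$ back to $C$ by shortest paths of length~$\ell$ and closing up along the shorter arc of $C$ (length $\leq \lfloor g/2\rfloor$) produces a closed walk, hence a cycle, on at most $2(\ell-1)+2\ell+\lfloor g/2\rfloor$ edges. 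This immediately forces $g \leq 8\ell-4$. The missing idea in your proposal is precisely this local argument finding two leaves of $G[R]$ that are close \emph{inside} $G[R]$; once you have that, no counting and no residual-case analysis are needed.
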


\begin{proof}
  Let us assume by contradiction that $G$ has girth $g > 8 \ell - 4$. We use the
  same notation for $C, R$ and $N_i$ (for every $i \in \intv{0}{\ell}$) as in
  \autoref{l:generic}.
As in \autoref{l:v-ub1}, $R$ induces a forest, all the leaves and isolated vertices of which lie in~$N_\ell$. But since $N_\ell$ is
independent and each $v \in N_\ell$ has exactly one neighbor in
$N_{\ell-1}$ (by \Itemref{e:g2} and
  \Itemref{e:g3} of \autoref{l:generic}, as $\ell<\floor{\frac{g}{4}}$) we
deduce that $G[R]$ contains only components of order at
least~3. Moreover, if we pick any leaf $v\in N_\ell$, there is in
$\induced{G}{R}$ a vertex of degree at least 3 which is at distance at most
$\ell-1$ from $v$, otherwise we would either find a vertex $u \in R
\setminus N_\ell$ of degree one in $G$, or an $\ell$-redundant edge,
thus contradicting the fact that $G$ is~reduced.

Having learned the structure of the graph, we are ready 
to derive a contradiction on the value of the girth as follows. 
Pick an arbitrary component in $G[R]$ and a path 
$P$ of maximal length in it. Let $v$ and $t$ be the two endpoints of the path. Let $u$ be a vertex of degree at least three of minimal distance from~$v$. Observe that such a vertex is unique and belongs to $P$. According to the previous paragraph, $u$ is at distance a most $\ell-1$ from the leaf $v$. Let $v'$ be a vertex
of maximal distance reachable from $u$ in $\induced{G}{R \setminus (P \setminus
  \{u\})}$, i.e. let $v'$ be in the same connected component of $\induced{G}{R \setminus (P \setminus
  \{u\})}$ as $u$ with the longest possible distance from $u$. Observe that $v'$ is a leaf and that that the distance
between $t$ and $v'$ in $G[R]$ is at most $|P|$ (by maximality of
$P$). Therefore, $v$ and $v'$ are at distance at most $2(\ell - 1)$ in
$\induced{G}{R}$. Let $Q$ be the unique path linking $v$ to $v'$ in
the forest $\induced{G}{R}$.
Let $P_v$ (resp.\ $P_{v'}$) be a shortest path from $v$ (resp.\ $v'$)
to~$C$ and let $w$ (resp.\ $w'$) be the endpoint of $P_v$ (resp.\
$P_{v'}$) in~$C$. Note that since $v,v' \in N_\ell$, both of these paths have length~$\ell$.
Let $Q'$ be the
shortest subpath of $C$ linking $w$ to $w'$ and observe that $Q'$ has
length at most $\floor{\frac{g}{2}}$.
The subgraph $\induced{G}{\vertices(Q) \cup \vertices(P_v) \cup
  \vertices(P_{v'}) \cup \vertices(Q')}$ clearly contains a
cycle. This subgraph has at most $2(\ell - 1) + 2\ell +
\frac{g}{2}$ edges hence $g\leq 8\ell -4$, a contradiction.
\end{proof}

 Combining
  \autoref{l:v-ub1} and~\autoref{l:small-g}, we obtain the following result.
\begin{corollary}\label{c:small-graph}
 If $G$ is an $\ell$-reduced graph not containing two $\ell$-distant cycles,
 then $|\vertices(G)| < 2\ell(8\ell - 4) \maxdeg(G)^\ell.$
\end{corollary}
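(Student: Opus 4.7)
The plan is a one-line substitution. By \autoref{l:v-ub1}, any $\ell$-reduced graph $G$ with no two $\ell$-distant cycles satisfies
\[
|\vertices(G)| < 2\ell\,\girth(G)\,\maxdeg(G)^{\ell}.
\]
By \autoref{l:small-g}, under the same hypotheses we have $\girth(G) \leq 8\ell - 4$. Since the right-hand side above is monotone in $\girth(G)$, I would simply replace $\girth(G)$ by the upper bound $8\ell - 4$ to obtain
\[
|\vertices(G)| < 2\ell(8\ell - 4)\,\maxdeg(G)^{\ell}.
\]

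There is no obstacle here: both ingredients have already been established, and no further case analysis or structural argument is needed. The only thing worth noting is that the two lemmas share exactly the same hypotheses ($G$ is $\ell$-reduced and contains no two $\ell$-distant cycles), so they can be chained without adjustment.
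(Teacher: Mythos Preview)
Your proposal is correct and matches the paper's own argument exactly: the corollary is stated as an immediate combination of \autoref{l:v-ub1} and \autoref{l:small-g}, with no additional work required.
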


We are now ready to prove \autoref{t:2cycles}.

\begin{proof}[Proof of \autoref{t:2cycles}]
  Consider the following procedure.
  Given a graph $G,$ and two integers $r$ and $\ell$, we apply
  \autoref{a:redalg} and obtain a graph $G'.$ If
  $\card{\vertices(G')} \geq 2\ell(8\ell - 4) \maxdeg(G)^\ell,$ then
  we output the graph $K_3 + K_3,$ otherwise we output $G'.$
  The call to the reduction algorithm runs in $O(\card{\vertices(G)}\ell)$-time, as
  explained in~\autoref{l:polyred}. Moreover, observe that either the
  procedure outputs the $\ell$-reduced input graph, or $K_3 + K_3$ in
  which case, the input graph is known to contain two $\ell$-distant
  cycles, by \autoref{c:small-graph}. According to \autoref{r:dmax}, the
  maximum degree of an $\ell$-reduced graph is never more than the one of the
  original graph. Therefore the output instance is equivalent to the
  input with regard to the considered problem. At last,
  the output graph has order upper-bounded by~$2\ell(8\ell - 4)
  \maxdeg(G)^\ell$. This proves the existence of a $(16\ell^2 \Delta^\ell)$-kernel for this~problem.
\end{proof}

\section{Dealing with more cycles}
\label{sec:morecycles}

In this part, we focus on the structure of graphs not containing an
$\ell$-packing of $r$ cycles, for some fixed positive integers~$r$ and~$\ell$. Using the ideas of
the above section, we show that the problem \textsc{Scattered
  Cycles} parameterized by $\Delta$, $r$, and $\ell$ admits a $O(\ell^2\Delta^\ell r
\log(\ell^2\Delta^\ell r))$-kernel.
\begin{definition}
For positive integers $\ell\geq 1, r \geq 1, d$ we denote by
$h^\ell_r(d)$ the least integer such that every $\ell$-reduced
graph $G$ of degree at most $d$ and with more than $h^\ell_r(d)$ vertices
has an $\ell$-packing of $r$ cycles. When such a number does not exist,
we set~$h^\ell_r(d) = \infty.$
\end{definition}

We showed in the previous section that $h_2^\ell(d) \leq 2\ell(8\ell - 4)
  d^\ell$ and it is easy to see that~$h_1^\ell(d)=1$. In this section we will show that for every $\ell\geq 2$,
  $r \geq 2$, $d \geq 1$ we have~$h^\ell_r(d) \leq
  24\ell^2 d^\ell r\log(8\ell^2 d ^\ell r)$. Let us first give
  a lower bound on~$h^\ell_r(d)$.
\begin{lemma}\label{l:lowerbnd} % lower bound on $h_r$
  For every $\ell \geq 1,\ r \geq 2, d \geq 2$, $h^\ell_r(d) \geq (r-1) \floor{\frac{d}{2}}^{\ell-1}$.
\end{lemma}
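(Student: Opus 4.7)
The plan is to exhibit an $\ell$-reduced graph $G$ of maximum degree at most $d$ on $(r-1)\lfloor d/2\rfloor^{\ell-1}$ vertices that contains no $\ell$-packing of $r$ cycles; the lower bound on $h^\ell_r(d)$ then follows directly from the definition.

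Set $k=\lfloor d/2\rfloor$. I would take $G$ to be the vertex-disjoint union of $r-1$ copies of a single connected gadget $H$ with at least $k^{\ell-1}$ vertices, where $H$ is $\ell$-reduced, has maximum degree at most $d=2k$, and has diameter strictly less than $\ell$. Granted such an $H$, the verification is immediate: any $\ell$-packing of cycles in $G$ contains at most one cycle per copy of $H$ (two distinct cycles inside the same copy would need to be $\ell$-distant, violating the diameter hypothesis, and cycles in different copies are automatically at infinite distance so give no issue), so every $\ell$-packing of cycles in $G$ has size at most $r-1<r$.

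The key step is constructing $H$. When $k\geq 2$, I would use a bounded-degree, small-diameter graph such as the undirected Kautz graph on strings of length $\ell-1$ over an alphabet of size $k+1$ with no two consecutive equal characters: it has $(k+1)k^{\ell-2}\geq k^{\ell-1}$ vertices, diameter at most $\ell-1$, maximum degree at most $2k=d$, and minimum degree at least $2k-1\geq 3$ (the only vertices losing an incident undirected edge via in/out coincidence are the period-two strings, which still retain degree $2k-1$). Since no vertex has degree at most $2$, there are no nontrivial subdivision paths and hence no $\ell$-redundant edges, so $H$ is $\ell$-reduced; the diameter bound immediately delivers the cycle-clustering property, while the presence of short directed $3$-cycles in the Kautz digraph (e.g.\ $abc\to bca\to cab\to abc$) guarantees $H$ does contain a cycle. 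The edge case $k=1$ (i.e.\ $d\in\{2,3\}$) collapses the statement to $h^\ell_r(d)\geq r-1$ and is witnessed trivially by taking $G$ to be $r-1$ vertex-disjoint triangles.

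The main obstacle I anticipate is the routine but slightly delicate verification of the minimum-degree and diameter bounds for the Kautz gadget in all relevant parameter ranges (and for very small $\ell$ one should just unfold the Kautz construction to check it is non-degenerate, e.g.\ for $\ell=2$ it is simply $K_{k+1}$). Once these checks are in hand, the modular disjoint-copies argument above closes the lower bound.
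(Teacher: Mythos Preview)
Your approach is essentially the paper's: exhibit a connected gadget $H$ of diameter at most $\ell-1$, maximum degree at most $d$, and minimum degree at least~$3$ (hence $\ell$-reduced) on at least $\lfloor d/2\rfloor^{\ell-1}$ vertices, then take $r-1$ disjoint copies. The only cosmetic difference is that the paper uses the undirected de~Bruijn graph $B(\lfloor d/2\rfloor,\ell-1)$ as the gadget, whereas you use the undirected Kautz graph; both are standard degree/diameter constructions with the same asymptotics, so nothing substantive changes.

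One small gap: your Kautz construction on strings of length $\ell-1$ degenerates when $\ell=1$ (strings of length~$0$), and your edge-case clause only covers $k=1$, not $\ell=1$ with $k\geq 2$. The paper handles $\ell=1$ (for all $d\geq 2$) separately by taking $H=C_3$, and the same fix works for you: $r-1$ disjoint triangles are $1$-reduced, have maximum degree $2\leq d$, and contain no $r$ vertex-disjoint cycles, giving $h^1_r(d)\geq 3(r-1)\geq r-1$.
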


\begin{proof}
  We start with $r = 2$. If $d \in \{2, 3\}$, then set $G=C_3$, a cycle on 3 vertices. For any $\ell \in \mathbb{N}$, $G$ is $\ell$-reduced by definition and clearly does not contain two $\ell$-distant cycles. 
By definition of $h_2^{\ell}(d)$, it follows that $G$ must have at most $h_2^{\ell}(d)$ vertices and we obtain that
$h^\ell_2(d) \geq |V(G)| = 3 > 1=\floor{\frac{d}{2}}^{\ell-1}$ holds for $d \in \{2,3\}$ and any $\ell \in \mathbb{N}$. 
Similarly, taking $G=C_3$, we can settle the lemma for $\ell=1$ and $d \geq 2$. Suppose now $d \geq 4$, $\ell \geq 2$ and  let $G$ be an undirected 
de Bruijn graph of type
  $(\floor{\frac{d}{2}}, \ell - 1)$, which is a regular graph of degree $2\floor{\frac{d}{2}}$,
 diameter $\ell - 1$ and order~$\floor{\frac{d}{2}}^{\ell-1}$ (cf.\ \cite[Section 2.3.1]{Miller13mooregraphs}
  for definition and properties). As the diameter of $G$ is $\ell -1$, $G$
  does not contain two $\ell$-distant cycles and since each of its vertices has degree $2\floor{\frac{d}{2}} \geq 4$, $G$ must be $\ell$-reduced. As before, we conclude that the graph
  $G$ must have at most $h^\ell_2(d)$
  vertices which establishes $h^\ell_2(d) \geq |V(G)| =
  \floor{\frac{d}{2}}^{\ell-1}$.
  Let us now consider the case $r>2.$ and let
  $G_r$ be the disjoint union of $r-1$ copies of the graph $G$.
  According to the remarks above, $G_r$ is $\ell$-reduced and does not contain an
  $\ell$-packing of $r$ cycles. Hence, $h^\ell_r(d) \geq \card{\vertices(G_r)} = (r-1) \floor{\frac{d}{2}}^{\ell-1}$.
\end{proof}

For every $r,d,\ell$ positive integers, let $f^\ell_r(d)
= 24 \ell^2 d^\ell r \log(8 \ell^2 d^\ell r)$. The following lemma states
that every $\ell$-reduced graph with degree at most $d$ either
contains an $\ell$-packing of $r$ cycles, or has size at most~$f^\ell_r(d)$.
\begin{lemma}\label{l:smallrcyc}
For every positive integers $r \geq 2$ and $d,$ we have $h^\ell_r(d) \leq f^\ell_r(d)$.
\end{lemma}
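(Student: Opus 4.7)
The plan is to proceed by induction on $r$. For the base case $r=2$, \autoref{c:small-graph} gives $h^\ell_2(d) \leq 16\ell^2 d^\ell$, which is below $f^\ell_2(d) = 48\ell^2 d^\ell \log(16\ell^2 d^\ell)$ because $\log(16\ell^2 d^\ell) \geq 4$ for all $\ell,d\geq 1$.

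For the inductive step, I would assume the bound for $r-1$ and take an $\ell$-reduced graph $G$ of maximum degree at most $d$ having no $\ell$-packing of $r$ cycles, aiming to show $|\vertices(G)|\leq f^\ell_r(d)$. The strategy is to pick a cycle $C$ realizing $g = \girth(G)$, set $B = \vertices(C) \cup \bigcup_{i=1}^{\ell-1} N_i$, and write $R = \vertices(G)\setminus B$. By \autoref{l:generic}\itemref{e:g1}, $G[R]$ contains no $\ell$-packing of $r-1$ cycles; running \autoref{a:redalg} on $G[R]$ then yields an $\ell$-reduced graph $\tilde G$ which, since the reduction rules preserve packings, still has no such packing. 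The inductive hypothesis gives $|\vertices(\tilde G)| \leq f^\ell_{r-1}(d)$, so it suffices to show that $|\vertices(G)| - |\vertices(\tilde G)| \leq f^\ell_r(d) - f^\ell_{r-1}(d)$; a direct computation shows this increment is $\Theta(\ell^2 d^\ell \log(\ell^2 d^\ell r))$.

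I would then decompose $|\vertices(G)| - |\vertices(\tilde G)| = |B| + (|\vertices(G[R])| - |\vertices(\tilde G)|)$. A layered degree count from $C$ (as in the proof of \autoref{l:v-ub1}) gives $|B| \leq g(d-1)^{\ell-1}$, and \autoref{l:generic}\itemref{e:g6} combined with $|\vertices(G)| \leq f^\ell_r(d)$ forces $g = O(\ell \log(8\ell^2 d^\ell r))$, so that $|B| = O(\ell d^\ell \log(\ell^2 d^\ell r))$. For the reduction loss, vertices lost come either from iterated leaf-deletions cascading inward from the boundary $\partial R$ (vertices of $R$ adjacent to some vertex of $N_{\ell-1}$, whence $|\partial R| \leq (d-1)|N_{\ell-1}| \leq g d^\ell$) or from contractions of long subdivision paths created at this boundary. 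Since $G$ itself is $\ell$-reduced, no subdivision path of $G$ has length more than $\ell$, so every cascade and every newly long chain in $G[R]$ terminates within $\ell$ steps of $\partial R$. Charging each lost vertex to an element of $\partial R$ at multiplicative cost $\ell$ would yield $|\vertices(G[R])| - |\vertices(\tilde G)| = O(\ell g d^\ell)$, and the total loss $O(\ell^2 d^\ell \log(\ell^2 d^\ell r))$ would then fit inside the target increment $f^\ell_r(d)-f^\ell_{r-1}(d)$, completing the induction.

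The hard part will be justifying this reduction-loss bound precisely. Boundary vertices originally of degree at least $3$ in $G$ may themselves become degree two in $G[R]$ and thus extend cascading leaf-deletions or newly long subdivision paths across themselves, so one must verify that each such ``absorbed'' boundary vertex causes at most $\ell$ further losses and that every lost vertex can be charged to a unique element of $\partial R$. Making the charging argument tight, and confirming that all constants fit within $24\ell^2 d^\ell r \log(8\ell^2 d^\ell r)$, is the principal technical obstacle of the proof.
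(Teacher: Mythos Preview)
Your overall architecture---induction on $r$, base case via \autoref{c:small-graph}, removing a ball around a shortest cycle, reducing $G[R]$, and controlling the loss by charging to the boundary $N_\ell$---is exactly the paper's approach, and your identification of the reduction-loss estimate as the technical core is correct (the paper handles it with a dedicated routine, \autoref{a:redalg2}, and shows $|R|-|\vertices(R^+)|\leq 2\ell|N_\ell|$).

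There is, however, a genuine circularity in your girth bound. You write that \autoref{l:generic}\itemref{e:g6} ``combined with $|\vertices(G)| \leq f^\ell_r(d)$'' forces $g = O(\ell\log(8\ell^2 d^\ell r))$. But $|\vertices(G)| \leq f^\ell_r(d)$ is precisely the conclusion you are trying to establish; you cannot invoke it to bound~$g$. The paper avoids this by reversing the order: it first derives, \emph{in terms of the unknown $g$}, the upper bound
\[
|\vertices(G)| \;\leq\; 2\ell g\, d^\ell + h^\ell_{r-1}(d)
\]
(this is exactly your layered degree count for $|B|$ plus the reduction-loss bound plus the inductive hypothesis), and only then confronts it with the lower bound $|\vertices(G)| \geq g\cdot 2^{g/(4\ell)-2}$ from \autoref{l:generic}\itemref{e:g6}. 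Comparing these two inequalities---both in terms of $g$---forces $g \leq 4\ell\log(2\ell d^\ell + h^\ell_{r-1}(d)) + 8\ell$ without assuming anything about $f^\ell_r(d)$. Substituting this back yields a clean recursion $h^\ell_r(d) \leq 8\ell^2 d^\ell \log(8\ell d^\ell + 4h^\ell_{r-1}(d)) + h^\ell_{r-1}(d)$, which one then checks is satisfied by~$f^\ell_r$. Your sketch has all the right pieces; you just need to reorder them so that the girth bound is derived from the interplay of the two $g$-dependent estimates rather than from the target inequality itself.
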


\begin{proof}
Let $r \geq 2$, $d, l \in \mathbb{N}$ be arbitrary positive integers
and consider a graph $G$ which is $\ell$-reduced, with maximum degree
at most $d$ and not containing an $\ell$-packing of $r$ cycles.
We use the same notation for $C, R$ and $N_i$ (for every $i \in
\intv{0}{\ell}$) as in \autoref{l:generic}.
Recall that $\induced{G}{R}$ does not contain an $\ell$-packing of $r-1$
cycles (\hyperref[e:g1]{Item~(\ref*{e:g1})} of \autoref*{l:generic}).

Notice that $R \setminus N_\ell$ does not contain a vertex of degree
less than two nor an edge that is $\ell$-redundant in $G[R]$. In what
follows, we will reduce the graph $\induced{G}{R}$ to the graph
$R^+$. Since $R^+$ is $\ell$-reduced graph without an
$\ell$-packing of $r-1$ cycles, it has bounded order, by
induction. From this we will conclude the bound on $|R|$ and hence the
bound on~$|\vertices(G)|$.
Now, we need to count the number of vertices lost in reduction
procedure. To make the calculation easier, we consider the slightly
modified reduction routine \autoref{a:redalg2}.

\begin{algorithm}
\DontPrintSemicolon
\KwIn{a graph $G$ and the sets $N_\ell$ and $R$}
\KwOut{the graph $R^+$}
$N:=N_\ell$\;
%$S:= R \backslash N_\ell$\;

\While{$N$ contains a vertex $v$ of degree one}{
  Let $P$ be the longest subdivision path in $G[R]$ whose length is at most $\ell+1$.
  Contract all the edges of $P$ and keep the resulting vertex in N. }
  %let $u$ be the unique neighbor of $v$\;
  %contract a longest subdivision path starting with $u$, not using $v$, and
  %whose length is at most $\ell + 1$ to a single vertex and keep it in $N$}
\While{$N$ contains a vertex $v$ of degree two}{
Let $u_1, u_2$ be the two neighbors of $v$ and let $P$ be the maximal
subdivision path going through $v$, of length at most $2\ell+1$. \; 
%\lIf{$\card{\edges(P)} \leq \ell$}{move vertex $v$ from $N$ to $S$}
\lIf{$\card{\edges(P)} \leq \ell$}{remove $v$ from $N$.}
\lIf{$\ell < \card{\edges(P)} \leq 2 \ell$}{contract a subpath of $P$ with $\card{P} - \ell$ edges (and including vertex $v$) into a single vertex and keep the resulting vertex in $R\backslash N$.}
\lIf{$\card{\edges(P)} = 2\ell+1$}{let $P'$ be a subpath of $P$ of length $l+1$ starting at vertex $v$ (and going through either $u_1$ or $u_2$). Contract all edges of the path $P'$ into a single vertex and keep it in $N$.}

%\lElse{contract a subpath of $P$ on $\card{P} - \ell$ vertices
%  (including $v$) into a single vertex}% and keep it in S}
}
\While{$N$ contains an isolated vertex $v$}{
delete $v$\;
}

\caption{Reduction of $\induced{G}{R}$.}
\label{a:redalg2}
\end{algorithm}

Let us briefly describe this routine, which works on the graph $G[R]$.
We consider a set $N \subseteq R$ with the property that every time we can apply the rule \remref{e:r1} or \remref{e:r2} to $G[R]$, there is a vertex of $N$ where the rule can be applied. We will make sure that this property is an invariant of the algorithm. Hence, the graph will be reduced when none of the rules will be applicable to a vertex of $N$. As $G$ is $\ell$-reduced, $N_\ell$ satisfies the above property as it contains all vertices of $R$ with a neighbor in $V(G) \setminus R$. Therefore we start the algorithm with $N=N_\ell$.
If $N$ contains a vertex $v$ of degree one, \remref{e:r2} allows us to delete it. To make the counting easier, we also delete vertices along the maximum subdivision path of length at most $\ell$ starting from $v$, which is also allowed by \remref{e:r2}. Similarly to what we do in \autoref{a:redalg}, we need to add the neighbor of the last deleted vertex to $N$ because a reduction might be applicable to it in a later step.
If $N$ has a vertex of degree 2 that is incident with an
$\ell$-redundant edge, we can apply \remref{e:r1} to contract this
edge. Again, we contract more edges to make the calculations easier
but we make sure that each edge we contract satisfies the requirements
of \remref{e:r1}. We also add a vertex incident to the lastly
contracted edge to~$N$, for the same reason as previously. Besides, the isolated vertices belonging to
$N$ can be deleted from the graph.
After completing these steps, $N$ contains only vertices where none of
our reductions rules can be applied: the graph is reduced. Let us now
count vertices lost during the reduction.
            
Let $d_1$ be the initial number of vertices of degree one in $G[R]$. As
there are no vertices of degree one in $R \backslash N_\ell$ we have $d_1
\leq |N_\ell|$.  
It is not hard to see that after each step of the first while loop of
\autoref{a:redalg2}, the quantity $|N|+d_1$ decreases by at
least one. Notice also, that after each step of the second or third
while loop the quantity $|N|$ decreases by at least one. To see this for the case $\card{E(P)}=2\ell+1$ in the second while loop, it is enough to note that the path $P'$ must have at least 2 vertices in $N$ as otherwise $P'$ is a subdivision path of length more than $\ell$ in $G$, which is not possible as $G$ is $\ell$-reduced.  Hence, all
in all, at most $|N|+d_1$ steps are performed in the reduction
algorithm. Now, notice that each step reduces the number of vertices
in $G[R]$ by at most $\ell$. Hence, at the end of the algorithm we will
have an $\ell$-reduced graph $R^+$ such that $|R|-|\vertices(R^+)| \leq \ell(|N_\ell|+d)
\leq 2\ell |N_\ell|$.

The graph $R^+$ is $\ell$-reduced and does not contain an
$\ell$-packing of $r-1$ cycles: by definition of $h^\ell_{r-1}$ we have $|R^+| \leq h^\ell_{r-1} (d)$. Putting
these bounds together, we obtain an~inequality:
\begin{align}
 |\vertices(G)| &= |C|+\sum_{1 \leq i < \ell} |N_i| +|R| \nonumber\\
&\leq |C| + \sum_{1 \leq i < \ell} |N_i| + 2\ell |N_\ell| + |R^+| \nonumber\\ 
&\leq g + \sum_{1 \leq i < \ell} g(d-2)(d-1)^{i-1} + 2\ell g(d-2)(d-1)^{\ell-1} + h^\ell_{r-1}(d) \nonumber\\ 
&\leq 2\ell g d^\ell + h^\ell_{r-1}(d) \label{eq:star}
\end{align} 
Now observe that when $g > 4\ell \log(2 \ell d^\ell +
h_{r-1}^\ell(d)) + 8\ell$, by \hyperref[e:g5]{Item (\ref*{e:g5}) of \autoref*{l:generic}} we get:
\begin{align*}
  |\vertices(G)|   &\geq g 2^{\frac{g}{4\ell}-2} \\
           &> g 2^{\log(2 \ell d^\ell +h_{r-1}^\ell(d))} \\
           &\geq g(2 \ell d^\ell +h_{r-1}^\ell(d))\\
%           &> 2\ell g d^\ell + h^\ell_{r-1}(d)\\
           &> |\vertices(G)|&(\text{using}\ \itemref{eq:star})
\end{align*}
This contradiction leads to the conclusion that $g \leq 4\ell \log(2 \ell d^\ell +
h_{r-1}^\ell(d)) + 8\ell$ and putting this bound on the girth of $G$ into~\itemref{eq:star} we~get:
\begin{align*}
|\vertices(G)| \leq 8 \ell^2 d^\ell \log(8\ell d^\ell +
  4h^\ell_{r-1}(d)) + h^\ell_{r-1}(d).
\end{align*}
As this holds for every $\ell$-reduced graph without $r$ $\ell$-distant cycles with degree bounded by $d$ we obtain:
\begin{align}
h_r(d) & \leq 8 \ell^2 d^\ell \log(8\ell d^\ell +
  4h^\ell_{r-1}(d)) + h^\ell_{r-1}(d). \label{eq:starstar}
\end{align} 
To finish the proof, we will check by induction on $r$ that $h_r(d)$
is at most~$f^\ell_r(d)$.
It is true for $r=2$ by \autoref{c:small-graph}. Suppose $r>2$
and $f^\ell_{r-1}(d) \geq h^\ell_{r-1}(d)$, and let $D=8\ell^2 d^\ell$ for
convenience. Then we have the following. 
\begin{align*}
f^\ell_r(d) - h^\ell_{r-1}(d)& \geq f^\ell_r(d) - f^\ell_{r-1}(d)  &\text{(induction hypothesis)}\\
		      & = 3D r\log (Dr) - 3D (r-1)\log(D(r-1)) \\
                           & \geq D \log((Dr)^3) \\
                           & \geq D \log(4\ell^2d^\ell + (8\ell^2d^\ell r) (4\ell^2d^\ell r)(8\ell^2d^\ell r))  \\
                           & \geq D \log(4 \ell^2 d^\ell + 96\ell^2
                             d^\ell r \log(8 \ell^2 d^\ell r) )&\text{(term by term, $r \geq 3$)} \\
		     & = 8\ell^2 d^\ell \log(8 \ell^2 d^\ell + 4f^\ell_r(d)) \\
                           & \geq 8\ell^2 d^\ell \log(8 \ell^2 d^\ell + 4h^\ell_r(d)) &\text{(induction hypothesis)}
\end{align*}

Together with~\itemref{eq:starstar} this implies: $f^\ell_r(d) \geq h^\ell_{r-1}(d) + 8\ell^2 d^\ell \log(8 \ell^2 d^\ell + 4h^\ell_r(d))
\geq h^\ell_r(d)$. Hence we are done.
\end{proof}

We are now able to prove \autoref{t:manycycles}.

\begin{proof}[Proof of \autoref{t:manycycles}]
  Given a graph $G$ and two integers $r$ and $\ell$, we apply
  \autoref{a:redalg2} to obtain in $O(\card{\vertices(G)}\ell)$
  steps an $\ell$-reduced graph $G'$ with $\Delta(G) = \Delta(G')$, as
  explained in \autoref{l:polyred} and \autoref{r:dmax}.

  If $\card{\vertices(G')} \geq f^\ell_r(\maxdeg(G))$, then by the
  virtue of \autoref{l:smallrcyc} the graph $G'$
  contains an $\ell$-packing of $r$ cycles, and then so do $G$. In this
  case we output the equivalent instance $(r \cdot K_3, \ell, r)$ and
  otherwise we output $(G', \ell, r)$.
  Observe that order of $G'$ is bounded by $f^\ell_r(\maxdeg(G)),$ a
  function of its maximum degree, $\ell$, and $r$ which are the
  parameters of this instance. This proves the existence
  of a kernel on $f^\ell_r(\Delta) = 24 \ell^2 \Delta^\ell r \log(8 \ell^2
  \Delta^\ell r)$ vertices for the problem \textsc{Scattered Cycles}
  parameterized by~$\Delta$, $\ell$, and~$r.$
\end{proof}

\section{The case of Disjoint Cycles}
\label{sec:disjcyc}

This section is devoted to the proof of \autoref{t:disjcyc},
which is similar in flavour with the proof
of~\cite[Corollary~2]{fominLNGS11hitt}. A \emph{feedback-vertex-set}
(\emph{fvs} for short) of a graph $G$ is a set of vertices meeting all
the cycles of $G$.
The proofs we will present here rely on the following results.

\begin{proposition}[Erdős-Pósa Theorem~\cite{EP62}]\label{p:ep}
  Let $f \colon \N_{\geq 1} \to \N$ be defined by $f(1) = 3$ and for every $k
  >1$, $f(k) = 4k(\log k + \log\log k + 4) + k - 1$.

  For every integer $k>1$, every graph contains either $k$ disjoint
  cycles, or a fvs of at most $f(k)$~vertices.
\end{proposition}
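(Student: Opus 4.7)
The plan is to prove the contrapositive by induction on $k$: if $G$ has no feedback vertex set of size at most $f(k)$, then $G$ contains $k$ vertex-disjoint cycles. The base case $k=1$ is immediate: if even the empty set fails to be a feedback vertex set then $G$ has a cycle, and $f(1)=3$ is a safe bound.

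For the inductive step, first reduce to graphs of minimum degree at least $3$: vertices of degree at most $1$ lie on no cycle and can be removed, and vertices of degree~$2$ can be dissolved (suppressed). Neither operation affects the maximum number of vertex-disjoint cycles, and a feedback vertex set of the reduced graph lifts to one of $G$ of the same size. Next, in a graph with $\mindeg \geq 3$ on $n$ vertices, a breadth-first search from an arbitrary vertex shows that layer sizes grow by a factor of at least $2$ until two layers meet, producing a cycle $C$ of length at most $2\lceil \log_2 n\rceil + 1$. Apply the induction hypothesis to $G - \vertices(C)$: either it contains $k-1$ vertex-disjoint cycles, which together with $C$ yield a packing of $k$ cycles in $G$; or it admits a feedback vertex set $F$ of size at most $f(k-1)$, in which case $F \cup \vertices(C)$ is a feedback vertex set of $G$ of size at most $f(k-1) + 2\log_2 n + 1$.

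A threshold argument then closes the loop: if $n \leq f(k)$, then $\vertices(G)$ itself is a feedback vertex set of the required size, so we may assume $n > f(k)$ and need the inductive bound $f(k-1) + 2\log_2 n + 1$ not to exceed $f(k)$. This reduces the entire argument to the numerical claim that the function $f(k) = 4k(\log k + \log\log k + 4) + k - 1$ satisfies
\[
f(k) \;\geq\; f(k-1) + 2\log_2 f(k) + O(1) \quad \text{for all } k \geq 2,
\]
which can be checked by a direct telescoping once one uses $\log_2 f(k) = \log_2 k + \log_2 \log_2 k + O(1)$.

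The main obstacle is the cleanup after removing $\vertices(C)$: the deletion may create new vertices of degree at most $2$ whose cascading suppression can shrink the graph beyond the $|\vertices(C)|$ vertices of the short cycle itself. Bounding this cascade by an additive $O(\log n)$ is what forces the specific constants $4$ and $+4$ in $f(k)$. The $\log\log k$ term is not cosmetic: it arises precisely from the self-referential gap $\log n \approx \log k + \log\log k$ when $n = \Theta(k\log k)$, and any attempt at a cleaner proof with $f(k) = O(k\log k)$ must accommodate it in the constants.
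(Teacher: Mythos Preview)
The paper does not prove this proposition: it is quoted as the classical theorem of Erd\H{o}s and P\'osa~\cite{EP62} and used as a black box in the proof of \autoref{t:disjcyc}. There is nothing in the paper to compare your argument against.

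Evaluated on its own terms, your sketch has the right architecture---reduce to minimum degree at least~$3$, extract a short cycle, delete it, recurse on $k-1$---but the induction does not close as written. You obtain a feedback vertex set of size at most $f(k-1) + 2\log_2 n + 1$ and then assert that the inequality $f(k) \geq f(k-1) + 2\log_2 f(k) + O(1)$ suffices. It does not: you only know $n > f(k)$, so $\log_2 n$ can be arbitrarily larger than $\log_2 f(k)$, and the inequality you actually need, namely $f(k-1) + 2\log_2 n + 1 \leq f(k)$, fails whenever $n$ is, say, exponential in~$k$. Replacing $\log_2 n$ by $\log_2 f(k)$ presupposes the very bound on $n$ that the theorem is supposed to deliver.

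The original proof avoids this circularity by first establishing the extremal form---every multigraph with minimum degree at least~$3$ and no $k$ disjoint cycles has at most $f(k)$ vertices---and only then reading off the feedback-vertex-set statement. For that induction to close one must bound the cascade triggered by deleting a shortest cycle, and this is where a further reduction to $3$-regular multigraphs (by splitting each high-degree vertex into a trivalent tree) becomes essential: in a $3$-regular multigraph, deleting a shortest cycle of length $g$ and suppressing the resulting degree-$2$ vertices removes at most $2g$ vertices in total, since no vertex outside a shortest cycle of length $g \geq 5$ can have two neighbours on it. Your final paragraph correctly identifies the cascade as the crux but leaves it as an assertion rather than an argument.
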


\begin{remark}\label{r:bound}
  For every $k\geq 3$, we have \[f(k) \leq c \cdot k \log k,\quad \text{where}\ c = \frac{17+4 \log 3 + 4 \log \log 3}{ \log 3} <16.4.\]
\end{remark}

\begin{proposition}[\cite{BafnaBF99}]\label{p:fvs}
  There is an algorithm that given an $n$-vertex graph computes in
  $O(n^2)$-time a 2-approximation of a minimum fvs.
  % in fact thecomplexity is O(min(|E| log |V|, |V|^2))
\end{proposition}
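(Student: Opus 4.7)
The plan is to follow the local‑ratio scheme of Bafna, Berman, and Fujito: iteratively subtract a non‑negative weight function from the vertex weights so that the union $F$ of vertices whose weight reaches zero is both a feedback vertex set and within a factor $2$ of optimum. Preprocessing first reduces the input to a ``clean'' graph: iteratively delete every vertex of degree at most one and then discard everything outside the non‑trivial biconnected components, leaving a subgraph $G_0$ in which every vertex has degree at least $2$ and lies on some cycle. Using DFS this runs in $O(n+m)$ time. Initialise $w(v)=1$ for every vertex of $G_0$ and $F=\emptyset$.

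The main loop has two cases. Case A: if $G_0$ contains a \emph{semi‑disjoint cycle} $C$ (a cycle at most one of whose vertices has degree $>2$ in $G_0$), set $\varepsilon=\min_{v\in V(C)}w(v)$ and replace $w$ by $w-\varepsilon\mathbf{1}_{V(C)}$. Case B: otherwise, set $p(v)=\deg(v)-1$ and $\varepsilon=\min_{v}w(v)/p(v)$, and replace $w$ by $w-\varepsilon\,p$. In either case, add to $F$ every vertex whose weight has become $0$, delete those vertices from $G_0$, and re‑clean by removing resulting degree‑one vertices. When $G_0$ becomes empty, perform one final reverse sweep over $F$ to drop any vertex whose removal keeps $F$ feedback‑covering, yielding an inclusion‑minimal FVS.

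By the local‑ratio lemma it suffices to show that at each single iteration the inclusion‑minimal output $F$ is a $2$‑approximation with respect to the subtracted weight. In Case A this is because every FVS meets $C$ in at least one vertex while every inclusion‑minimal FVS meets it in at most two; in Case B one establishes the combinatorial inequality
\[
\sum_{v\in F^{\ast}}(\deg(v)-1)\;\geq\;\tfrac{1}{2}\sum_{v\in V(G_{0})}(\deg(v)-1)
\]
for every FVS $F^{\ast}$ of a clean graph without a semi‑disjoint cycle, exploiting the fact that every cycle in such a graph contains at least two branch vertices, so that summing degrees over $F^{\ast}$ double‑covers the branch structure. Telescoping the per‑iteration inequalities across the entire run yields $|F|\leq 2\cdot\mathrm{opt}$.

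For the runtime, each iteration zeros the weight of at least one vertex and hence strictly decreases $|V(G_{0})|$, bounding the number of iterations by $n$. The per‑iteration work consists in detecting a semi‑disjoint cycle by DFS, or recomputing $\min_{v}w(v)/p(v)$, which naïvely costs $O(n+m)$. The main obstacle will be amortising this work: with incremental maintenance of degrees, a priority queue keyed on $w(v)/p(v)$, and the block‑cut tree of $G_{0}$, each edge update is charged against a vertex deletion so that the total cost across all iterations is $O(n^{2})$ rather than $O(n\cdot m)$. This bookkeeping is the main technical burden of the proof, and carrying it through gives the claimed $O(n^{2})$ time bound.
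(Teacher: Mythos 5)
The paper does not prove this statement at all: it is imported verbatim from Bafna, Berman, and Fujito \cite{BafnaBF99} and used as a black box, so there is no internal proof to compare against. Your reconstruction does follow the correct algorithmic skeleton of that paper (cleaning, the two local-ratio cases driven by semi-disjoint cycles versus the weight function $d(v)-1$, the final reverse minimality sweep), but the key combinatorial inequality you invoke in Case B is misstated and is in fact false. You claim that every feedback vertex set $F^{\ast}$ of a clean graph with no semi-disjoint cycle satisfies $\sum_{v\in F^{\ast}}(\deg(v)-1)\geq\frac{1}{2}\sum_{v\in V(G_0)}(\deg(v)-1)$, and you then compare the output against the trivial upper bound $\sum_{v\in V(G_0)}(\deg(v)-1)$. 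Take $G_0=Q_3$, the $3$-cube: it is $3$-regular (hence clean, with no semi-disjoint cycle since every cycle contains at least four vertices of degree $3$), and $\{000,110,011\}$ is a feedback vertex set, so the left-hand side is $3\cdot 2=6$ while the right-hand side is $\frac{1}{2}\cdot 8\cdot 2=8$. So the inequality fails already for a minimum fvs.

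The actual argument in \cite{BafnaBF99} does not compare $F^{\ast}$ against the whole vertex set. It proves two separate bounds in terms of the cycle rank: every fvs $F^{\ast}$ satisfies $\sum_{v\in F^{\ast}}(\deg(v)-1)\geq |\edges(G_0)|-|\vertices(G_0)|+c$ (because deleting $F^{\ast}$ leaves a forest, so counting edges incident with $F^{\ast}$ gives this), and every \emph{inclusion-minimal} fvs $F$ of a clean graph without semi-disjoint cycles satisfies $\sum_{v\in F}(\deg(v)-1)\leq 2\bigl(|\edges(G_0)|-|\vertices(G_0)|+c\bigr)$, the latter being the genuinely delicate step (it is here that minimality and the absence of semi-disjoint cycles are both used, via a charging argument on the forest $G_0\setminus F$). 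The factor $2$ per round comes from the ratio of these two bounds, not from $F\subseteq V(G_0)$. Without this correction your Case B analysis does not go through; the rest of your sketch (Case A, the local-ratio telescoping, and the $O(n^2)$ amortization) is consistent with the cited source, though the runtime bookkeeping is also only asserted rather than carried out.
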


\begin{proof}[Proof of \autoref{t:disjcyc}]
  Let us describe the steps of a kernelization algorithm for
  \textsc{Disjoint Cycles}.
  We are given a graph $G$ and an integer $r$.
  We assume that $r>1$, and $\Lambda(G) >1$, otherwise the problem is
  trivially solvable in polynomial time. If $r=2$, then we use the
  algorithm of \autoref{t:2cycles}.
  Let $G'$ be the graph
  obtained by the application on $G$ of the reduction routine
  \autoref{a:redalg}, for $\ell=1$.
  Using the algorithm of \autoref{p:fvs}, we compute a
  2-approximation of a minimum fvs $X$ of $G'$.
  If $|X| > 2f(r)$, by \autoref{p:ep} the graph $G'$ contains $r$
  disjoint cycles (and so do $G$): we return the equivalent positive instance
  $(r\cdot K_3, r)$. Otherwise, we return~$(G', r)$.

  Let us now bound the order of $G'$ in the latter case.
  Let $N = \neigh_{G'}(X) \setminus X$ (the neighbors of $X$
  outside $X$) and let $R$ be the graph obtained from the forest $G'
  \setminus X$ by adding a neighbor of degree one to every vertex of
  $N$ that has degree two in $G' \setminus X$. Observe that $R$ is a
  $1$-reduced forest with $N$ leaves: by \autoref{c:forest-size}
  we get $|\vertices(R)| \leq  2 \card{N} - 2$.
  Let $\Lambda = \Lambda(G)$.
  As $X$ is a fvs, for every vertex $u \in X$ the induced subgraph $G[\neigh(u)
  \setminus X]$ is a forest. It is well-known that any forest has an
  independent set on at least half of its vertices. Therefore, $|G[\neigh(u)
  \setminus X]| < 2\Lambda$, otherwise $G$ would contain an induced $K_{1,\Lambda}$.
  We can then deduce that $|N| < 2 \Lambda |X|$.
  We are now able to bound the order of $G'$, also using the fact that
  $R$ is a supergraph of $G' \setminus X$:
\begin{align*}
  \card{\vertices(G')} & = \card{X} + \card{\vertices(G' \setminus X)}\\ 
                       & \leq \card{X} + \card{\vertices(R)}\\
                       & < \card{X} + 4 \Lambda \card{X} - 2\\
                       & < 9 \Lambda f(r) & \text{(as $\Lambda > 1$)}\\
  \card{\vertices(G')} & < 148 \Lambda r \log r& \text{(using \autoref{r:bound})}
\end{align*}
\end{proof}

\section{Hardness}
\label{sec:hard}
This section contains the proofs of the hardness results claimed in
\autoref{tab:param}. Let us first define the problem \textsc{Independent Set}, as most of our proofs relies on its properties.

\begin{framed}
\noindent \textsc{Independent Set}
  \begin{description}
\item[Input:] a graph $G$ and an integer $r\geq 2$;
%\item[Parameters:] $r, \ell, \Delta(G)$;
\item[Question:] Does $G$ have a collection of $r$ pairwise
  non-adjacent vertices?
\end{description}
\end{framed}

The known facts that we will use about independent set are the following.
\begin{proposition}\label{p:is}
  \textsc{Independent Set}~is
  \begin{inparaenum}[(i)]
  \item NP-hard, even when restricted to
    graphs of maximum degree~3~\cite{GareyJohnson}; and\label{e:isnp}
  \item W[1]-hard when parameterized by~$r$~\cite{ParameterizedComplexity}.\label{e:isw1}
  \end{inparaenum}
\end{proposition}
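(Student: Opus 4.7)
The plan is to invoke standard results from the literature rather than reprove them, since Proposition~\ref{p:is} is stated with citations to classical sources. Nevertheless, let me outline the arguments one would reproduce if starting from scratch.

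For part (i), the NP-hardness of \textsc{Independent Set} restricted to graphs of maximum degree at most~$3$, the canonical route is a polynomial-time reduction from \textsc{3-SAT}. For each Boolean variable one introduces two adjacent literal vertices, and for each clause a triangle whose three vertices are joined to the corresponding literal vertices; a satisfying assignment then corresponds bijectively to an independent set of size equal to the number of variables plus the number of clauses. To enforce maximum degree at most~$3$, one replaces every vertex of higher degree by a suitable bounded-degree gadget (an ``expansion'' via a sequence of degree-three vertices) which preserves the independence number up to a known offset. Equivalently, one may reduce directly from \textsc{Vertex Cover} on cubic graphs, which is known to be NP-hard, using the standard fact that independent sets are the complements of vertex covers.

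For part (ii), the W[1]-hardness when parameterized by~$r$, the classical route is a parameterized reduction from \textsc{Clique}, which is W[1]-hard when parameterized by the clique size. Given an instance $(H,k)$ of \textsc{Clique}, one outputs the pair $(\overline{H},k)$, where $\overline{H}$ denotes the complement of~$H$. The parameter $k$ is preserved, and $H$ contains a clique of size~$k$ if and only if $\overline{H}$ contains an independent set of size~$k$, so this is a valid parameterized reduction.

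The main obstacle here is essentially historical rather than technical: these are not new contributions of the present paper but classical theorems, so the ``proof'' consists entirely in invoking the cited references~\cite{GareyJohnson, ParameterizedComplexity}, and no combinatorial insight specific to \textsc{Scattered Cycles} is required at this stage. The substance of the hardness section will instead lie in the reductions \emph{from} \textsc{Independent Set} that establish the entries of \autoref{tab:param}.
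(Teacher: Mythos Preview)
Your proposal is correct and matches the paper's treatment: Proposition~\ref{p:is} is stated without proof and simply cites~\cite{GareyJohnson} and~\cite{ParameterizedComplexity} for the two classical facts. Your additional sketches of the underlying reductions are accurate but go beyond what the paper does; the authors rely entirely on the citations.
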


\begin{lemma}\label{l:red}
  For every instance $(G,r)$ of \textsc{Independent Set} and for every
  $\ell \in \N$, $\ell \geq 2$ we can construct in $O(\ell |\edges(G)|)$ steps
  an instance $(G', \ell,r)$ of \textsc{Scattered Cycles} with
  $|\vertices(G')| = O(\ell |\edges(G)|)$ and $\Delta(G') =
  \Delta(G)+2$ such that
  $(G,r)$ is a positive instance iff $(G', \ell, r)$ is a positive instance.
\end{lemma}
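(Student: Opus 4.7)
The plan is to give a polynomial-time reduction from \textsc{Independent Set}. After a preprocessing step that removes isolated vertices of $G$ and decrements $r$ accordingly, so that $\card{\vertices(G)} \leq 2\card{\edges(G)}$, I build $G'$ as follows. For every $v \in \vertices(G)$, I introduce a triangle $T_v$ on three fresh vertices, one of which, call it $v^*$, is distinguished. For every edge $\{u,v\} \in \edges(G)$, I add a new path $P_{uv}$ of length $\ell - 1$ linking $u^*$ to $v^*$, whose $\ell - 2$ internal vertices are fresh and of degree two. This takes $O(\ell\cdot\card{\edges(G)})$ time and yields $\card{\vertices(G')} = 3\card{\vertices(G)} + (\ell-2)\card{\edges(G)} = O(\ell\card{\edges(G)})$. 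Each $v^*$ has two neighbours inside $T_v$ and one per edge of $G$ incident to $v$, while the other vertices of $T_v$ have degree two and the interior vertices of every $P_{uv}$ have degree two, so $\Delta(G') = \Delta(G) + 2$.

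For the forward direction, I would show that if $\{v_1,\ldots,v_r\}$ is independent in $G$, then $\{T_{v_1},\ldots,T_{v_r}\}$ is an $\ell$-packing. The only edges leaving $T_v$ are incident to $v^*$, so any path from $T_u$ to $T_v$ with $u\neq v$ in $G$ must enter and leave through $u^*$ and $v^*$. For $u,v$ non-adjacent in $G$, such a path must traverse at least two edge-paths $P_{\cdot,\cdot}$ in sequence (via some intermediate distinguished vertex), giving length at least $2(\ell-1)\geq \ell$ since $\ell\geq 2$.

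For the backward direction, the key is a clean structural description of the cycles of $G'$. Since the two non-distinguished vertices of $T_v$ have degree two in $G'$, any cycle $C$ of $G'$ that visits one of them must traverse the whole triangle $T_v$, which would force $v^*$ to have degree $\geq 3$ in $C$ unless $C=T_v$. Hence every cycle of $G'$ is either a triangle $T_v$ or uses only distinguished vertices and \emph{complete} edge-paths $P_{uv}$, in which case the sequence of distinguished vertices it visits is a cycle of $G$. Given an $\ell$-packing $C_1,\ldots,C_r$, I choose for each $C_i$ some $\phi(i)\in\vertices(G)$ with $\phi(i)^* \in \vertices(C_i)$. Disjointness of the $C_i$'s immediately yields that the $\phi(i)$'s are pairwise distinct, and if $\phi(i)$ and $\phi(j)$ were adjacent in $G$ then the path $P_{\phi(i)\phi(j)}$ would witness a distance of at most $\ell-1$ between $C_i$ and $C_j$, contradicting $\ell$-distance. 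Therefore $\{\phi(1),\ldots,\phi(r)\}$ is an independent set of size $r$ in $G$.

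The main obstacle is the structural claim that every cycle of $G'$ is either one of the $T_v$'s or projects to a genuine cycle of $G$ through its distinguished vertices; once this is secured, both directions follow from elementary distance estimates along the edge-paths, and the degree and size bounds are automatic from the construction.
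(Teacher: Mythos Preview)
Your construction is exactly the paper's: attach a triangle at each vertex and replace every edge of $G$ by a path of length $\ell-1$ (your $v^*$ plays the role of the original vertex $v$). The correctness arguments also match up to presentation --- the paper omits your isolated-vertex preprocessing and gives a terser backward direction (it only observes that every cycle of $G'$ contains an original vertex and then translates distances back to $G$), but the approach is the same.
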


\begin{proof}
  Let $G'$ be the graph
  obtained from $G$ by subdividing every edge $\ell - 2$ times and for
  every vertex $v$ of the original graph adding the two vertices $v'$
  and $v''$ and the three edges $\{\{v,v'\},\{v',v''\}, \{v'',
  v\}\}$ (calling $C_v$ the obtained triangle).
  This construction requires to add $\ell$ new vertices for each edge of
  $G$ and a constant number of new vertices for each vertex of $G$, hence it can
  be performed in $O(\ell|\edges(G)|)$ steps. 
  Observe that $|\vertices(G')| = 3|\vertices(G)| + (\ell-2)
  |\edges(G)| = O(\ell|\edges(G)|)$ and $\Delta(G') = \Delta + 2$.
  Let us show that for every $r \in \N$, $G$ has an independent set of size $r$ iff $G'$ has an $\ell$-packing
of $r$ cycles.

Direction~``$\Rightarrow$''. Let $\{v_1, \dots, v_r\}$ be an independent
  set of size $r$ in $G$. Then $\{C_{v_1}, \dots,
  C_{v_r}\}$ is an $\ell$-packing of $r$ cycles. Indeed,
  by definition of $G'$ for every $i \in \intv{1}{r}$, the graph
  $C_{v_i}$ is a triangle. Besides, for every $i,j
  \in \intv{1}{r}$, $i \neq j$, the vertices $v_i$ and $v_j$ are at
  distance 2 in $G$, hence $C_{v_i}$ and $C_{v_j}$ are at distance at
  least~$2 \ell -4 \geq \ell$.

  Direction~``$\Leftarrow$''. Let $\mathcal{S} = \{S_1, \dots, S_r\}$ be an $\ell$-packing of
  $r$ cycles in $G'$. Observe that there is no cycle in $G'$ no vertex
  of which belongs to the original graph $G$. Therefore for every $i \in \intv{1}{r}$ the
  subgraph $S_i$ contains a vertex $v_i$ which belong to~$G$. Moreover,
  for every $i,j \in \intv{1}{r}$, $i \neq j$, the vertices $v_i$ and
  $v_j$ are at distance at least $\ell$ in $G'$ (as $\mathcal{S}$ is a
  $\ell$-packing), thus they are at distance at least
  2 if $\ell = 2$ and $\ceil{\frac{\ell}{\ell-2}} \geq 2$ otherwise in $G$. Consequently $\{v_1,
  \dots, v_r\}$ is an independent set of size $r$ in~$G$.
\end{proof}

\begin{corollary}\label{c:isred}
  For every $\ell\geq 2$, if there is an algorithm solving the problem
  \textsc{Scattered Cycles} in $f_\ell(r,\Delta,n)$ steps (where $n$ is the
  order of the input graph and $\Delta$ its maximum degree) for some
  function $f_\ell \colon \N^3 \to \N$, then there is an algorithm solving
  \textsc{Independent Set} in at most $(f_\ell(r,\Delta  + 2, n^{O(1)}) + n^{O(1)})$ steps.
%$(f(r,\Delta  + 2, 3n+m\ell) + n+m\ell)$ in fact  
\end{corollary}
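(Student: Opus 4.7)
The plan is to simply invoke the reduction constructed in \autoref{l:red} and pipe its output into the hypothesized algorithm for \textsc{Scattered Cycles}. More concretely, given an input $(G,r)$ of \textsc{Independent Set} on $n$ vertices with maximum degree $\Delta$, I would first build the graph $G'$ of \autoref{l:red} (for the fixed value of $\ell$ in the corollary's statement). By that lemma, this construction takes $O(\ell|\edges(G)|) = n^{O(1)}$ steps, produces a graph on $|\vertices(G')| = O(\ell|\edges(G)|) = n^{O(1)}$ vertices, and has maximum degree $\Delta(G') = \Delta + 2$.

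Next, I would run the hypothesized algorithm for \textsc{Scattered Cycles} on the instance $(G',\ell,r)$; by assumption this takes $f_\ell(r, \Delta(G'), |\vertices(G')|) = f_\ell(r, \Delta+2, n^{O(1)})$ steps. Finally, I would output the answer returned by this algorithm. Correctness follows immediately from the equivalence asserted in \autoref{l:red}: $(G,r)$ is a \emph{yes}-instance of \textsc{Independent Set} if and only if $(G',\ell,r)$ is a \emph{yes}-instance of \textsc{Scattered Cycles}.

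Summing the two phases, the total running time is $n^{O(1)} + f_\ell(r,\Delta+2,n^{O(1)})$, which matches the bound claimed in the corollary. There is essentially no technical obstacle here, since all the work has been done in \autoref{l:red}; the only thing to be slightly careful about is that $\ell$ is regarded as fixed so that the polynomial blow-up from the edge subdivisions is absorbed into the $n^{O(1)}$ factor, and that the additive $+2$ in the maximum degree is exactly the one produced by attaching the triangles $C_v$ in the construction of $G'$.
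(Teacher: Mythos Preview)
Your proposal is correct and matches the paper's approach exactly: the corollary is stated without proof in the paper, as it follows immediately from \autoref{l:red} in precisely the way you describe. The only content is the observation that the reduction of \autoref{l:red} has polynomial size and increases the maximum degree by exactly~$2$, both of which you handle correctly.
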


\begin{corollary}\label{c:hard}
\textsc{Scattered Cycles} is
\begin{inparaenum}[(a)]
\item NP-hard when restricted to $\ell=2$ and $\Delta=5$; and\label{e:fixedld}
\item W[1]-hard when parameterized by $r$ and $\ell$. \label{e:w1}
\end{inparaenum}
\end{corollary}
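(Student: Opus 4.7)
The plan is to derive both parts as direct consequences of the reduction already constructed in \autoref{l:red} (equivalently, \autoref{c:isred}), together with the two hardness facts about \textsc{Independent Set} recorded in \autoref{p:is}. Since the non-trivial work (the actual gadget construction and the ``independent sets $\leftrightarrow$ scattered triangles'' correspondence) is already done, what remains is bookkeeping on the parameters.

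For part~\itemref{e:fixedld}, I would start from an instance $(G,r)$ of \textsc{Independent Set} with $\maxdeg(G)\leq 3$, which is NP-hard by~\itemref{e:isnp} of \autoref{p:is}. Applying \autoref{l:red} with $\ell=2$ produces in polynomial time an equivalent instance $(G',2,r)$ of \textsc{Scattered Cycles} with $\maxdeg(G')=\maxdeg(G)+2\leq 5$. This polynomial-time many-one reduction immediately yields NP-hardness of \textsc{Scattered Cycles} restricted to $\ell=2$ and $\Delta=5$.

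For part~\itemref{e:w1}, I would observe that \autoref{l:red} is, for any fixed $\ell\geq 2$, a parameterized reduction from \textsc{Independent Set} parameterized by $r$ to \textsc{Scattered Cycles} parameterized by $(r,\ell)$: the new parameter is $(r,\ell)$ with $\ell=2$ a constant, the output size is polynomial in the input size, and the running time is polynomial. Hence an FPT algorithm for \textsc{Scattered Cycles} parameterized by $r$ and $\ell$ would, by setting $\ell=2$, give an FPT algorithm for \textsc{Independent Set} parameterized by $r$, contradicting W[1]-hardness from~\itemref{e:isw1} of \autoref{p:is}.

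There is no real obstacle here; the only point requiring a moment of care is to make sure the parameter tracking in part~\itemref{e:w1} is genuinely a parameterized reduction (polynomial time, parameter of the output bounded by a computable function of the parameter of the input), which is immediate since fixing $\ell=2$ makes the output parameter a function of $r$ alone. I would present the proof in two short paragraphs, one per item, explicitly invoking \autoref{l:red} and the corresponding item of \autoref{p:is}.
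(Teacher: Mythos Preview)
Your proposal is correct and matches the paper's own proof essentially line for line: both parts are obtained by invoking \autoref{l:red} (with $\ell=2$) together with the respective item of \autoref{p:is}, and your additional remarks verifying that this is a valid parameterized reduction only make the argument more explicit than the paper's one-line treatment of~\itemref{e:w1}.
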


\begin{proof}
  \hyperref[e:fixedld]{Item~(\ref*{e:fixedld})}.
  Let $\ell=2$. The reduction of \autoref{l:red} produces in polynomial time an
  instance of the problem \textsc{Scattered Cycles} restricted to
  graphs of maximum degree 5 from an instance of \textsc{Independent
    Set} restricted to graphs of maximum degree 3. Using \hyperref[e:isnp]{item
  (\ref*{e:isnp}) of \autoref*{p:is}} it follows that \textsc{Scattered Cycles} is NP-hard
  even when~$\Delta=5$ and~$\ell=2$.

  \hyperref[e:fixedld]{Item~(\ref*{e:w1})} is a consequence of \autoref{l:red} and of \hyperref[e:isw1]{item
  (\ref*{e:isw1}) of \autoref*{p:is}}.
\end{proof}

\end{document}